\title[A Linear Category of Polynomial Diagrams]
      {A Linear Category of Polynomial Diagrams}
\author{Pierre Hyvernat}
\address{Laboratoire de Math\'ematiques\\
  CNRS UMR 5126 -- Universit\'e de Savoie\\
  73376 Le Bourget-du-Lac Cedex\\
  France}
\email{\href{mailto:pierre.hyvernat@univ-savoie.fr}{pierre.hyvernat@univ-savoie.fr}}
\urladdr{\url{http://lama.univ-savoie.fr/~hyvernat/}}
\keywords{polynomial functors; linear logic}
\date{September 2012}
\newif\ifUSEEXTERNALGRAPHICS
  \tikzset{morphism/.style={->,font=\scriptsize}}
  \tikzset{injection/.style={right hook->,font=\scriptsize}}
  \tikzset{twocell/.style = {double equal sign distance,double,-implies,shorten <= .3cm,shorten >=.3cm,font=\scriptsize,draw}}
  \tikzset{crossover/.style={preaction={solid,draw=white,-,line width=6pt}}}
  \tikzset{over/.style={fill=white,inner sep=1.5pt}}
  \newcommand{\pullback}[5][.7cm]{%
    \node[coordinate] (a) at (#2) {} ; %
      \node[coordinate] (b) at (#3) {} ; %
      \node[coordinate] (c) at (#4) {} ; %
      \path let
      \p1= ($(a) - (b)$) , %
      \p2= ($(c) - (b)$), %
      \n1={veclen(\x1,\y1)}, %
      \n2={veclen(\x2,\y2)}, %
      \n3={min(#1, .4 * min(\n1,\n2))},%
      \p3=($(\x1/\n1,\y1/\n1)$),%
      \p4=($(\x2/\n2,\y2/\n2)$),%
      \p5=($(\n3 * \x3, \n3 * \y3)$),%
      \p6=($(\n3 * \x4, \n3 * \y4)$) in%
      (b) ++ (\p5)  node[coordinate] (x) {} 
      (b) ++ (\p6)  node[coordinate] (y) {} 
      ;
    \node[coordinate] (c) at (barycentric cs:x=1,y=1,b=-1) {} ; %
      \path (x) -- node[pos=.02] (xup){} (c) ; %
      \path (y) -- node[pos=.02] (yup){} (c) ; %
      \path[#5] (xup) -- (c) -- (yup) ; %
  }
\newtheorem{prop}{Proposition}[section]
\newtheorem{lem}{Lemma}[section]
\newtheorem{defi}{Definition}[section]
\newcommand\inter[2]{[\![#2]\!]\,=\,#1}
\newcommand\Rule[3]{\frac{\Big.\quad{#1}\quad}{\Big.#2}\quad\hbox{\scriptsize #3}}
\newcommand\C{\mathbb{C}}
\newcommand\Set{\mathsf{Set}}
\newcommand\Rel{\mathsf{Rel}}
\newcommand\Span{\mathsf{Span}}
\newcommand\Poly{\mathsf{Poly}}
\newcommand\PolyFun{\mathsf{PolyFun}}
\newcommand\PE{\mathsf{PSim}}
\newcommand\PEFun{\mathsf{FSim}}
\newcommand\Hom{\mathrm{Hom}}
\newcommand\Sl[2]{{#1}/_{\!#2}}       
\newcommand\iso{\cong}
\DeclareMathOperator{\id}{id}
\newcommand\Zero{\mathbf{0}}
\newcommand\One{\mathbf{1}}
\newcommand\Plus{\oplus}
\newcommand\Tensor{\otimes}
\newcommand\Linear{\multimap}
\newcommand\Bang{\textbf{!}}
\newcommand\exclamation{\hbox{\rm!}}
\newcommand\eqdef{\stackrel{\smash{\text{\sf def}}}{=}}
\newcommand\BLANK{{\texttt{\char"5F}}}
\newcommand\at{@}
\newcommand\word[1]{\mathfrak{#1}}
\newcommand\dd{\partial}
\newcommand\inl{\mathop{\mathrm{inl}}}
\newcommand\Mf{\ensuremath{\mathcal{M}_{\!f}}}
\newcommand\Sem[1]{#1}          
\begin{document}

\begin{abstract} 

  We present a categorical model for intuitionistic linear logic where
  {objects} are polynomial diagrams and morphisms are \emph{simulation
  diagrams}. The multiplicative structure (tensor product and its adjoint) can
  be defined in any locally cartesian closed category, whereas the additive
  (product and coproduct) and exponential ($\Tensor$-comonoid comonad)
  structures require additional properties and are only developed in the
  category~$\Set$, where the objects and morphisms have natural
  interpretations in terms of games, simulation and strategies.

\end{abstract} 

\maketitle

\section*{Introduction}  

Categories of games abound in the literature of denotational semantics of
linear logic. We present a category based on a notion of game that differs
from traditional games semantics. Objects are a kind of two players game and
following a well established tradition \cite{Joyal}, morphisms from~$G_1$
to~$G_2$ amount to strategies (for the first player) in a game
called~$G_1\Linear G_2$. Composition of strategies is a ``relational
composition'', or more precisely, ``span composition''. In particular,
interaction is irrelevant to the definition of composition. The theory is
developed in the abstract setting of locally cartesian closed categories and
polynomial diagrams \cite{polyMonads,notesJoachim}.

The paper is organized as follows: after some preliminaries
(section~\ref{section:prelim}), we construct a symmetric monoidal closed
category around polynomial diagrams over any locally cartesian closed category
(section~\ref{section:PE}). We then restrict to the case where the base
category is the category of sets and functions and extend the SMCC structure
to a denotational model for full intuitionistic linear logic by adding a
biproduct (cartesian and cocartesian structure) and constructing an
exponential comonad for the free commutative~$\Tensor$-comonoid
(section~\ref{section:ISinSet}). The proof that the exponential structure is
indeed the free commutative~$\Tensor$-comonoid involves a lot of tedious
checking and, for simplicity's sake, is only spelt out in a simpler category.


\subsection*{Related Works} 

Part of this work is implicitly present in~\cite{PhD,giovanni} where
polynomial endofunctors are called ``interaction systems'' and everything was
done with dependant type theory. The focus was on representing formal
topological spaces with dependent types. Categorically speaking, it amount to
the following: it is possible to show that the free monad construction on
polynomial endofunctors described in the second part of~\cite{polyMonads}
gives a monad on the category of polynomial endofunctors and simulations
(Section~\ref{sub:simulations}). The category introduced in~\cite{giovanni} is
``simply'' the Kleisli category of the this monad.

The same kind of polynomial functors are also used by Altenkirch and Morris
(together with Ghani, Hancock and McBride) in~\cite{indexed-containers} under
the name ``indexed containers'' in order to give a semantics to a large family
of indexed, strictly positive datatypes. The unindexed version was developed
earlier in~\cite{cont}.


\subsection*{Games} 
\label{sub:games}

The games we consider are non-initialized, state-based, 2-players, alternating
games. More precisely, a game is given by the following data:
\begin{itemize}
  \item a set~$I$ of \emph{states},
  \item for each state~$i\in I$, a set of \emph{moves}~$A(i)$ for Alfred,
  \item for each move~$a\in A(i)$, another set of \emph{counter moves}~$D(a)$
    for Dominic,
  \item a function~$n$ going from counter moves to states, giving the new
    state after each possible choice of counter move from Dominic. We usually
    write~$i[a/d]$ instead of~$n(d)$ whenever~$a\in A(i)$ and~$d\in D(a)$.
\end{itemize}
As is customary in categories, we represent a family of sets indexed by~$I$ by
an object of the slice category~$\Sl\Set I$. A game is thus simply given by
a diagram of the form
\[
  \begin{tikzpicture}[baseline=(m-1-1.base)]
    \matrix (m) [matrix of math nodes, column sep=3em]
      { I & D & A & I \\ };
    \path[morphism]
      (m-1-2) edge node[above]{$n$} (m-1-1)
      (m-1-2) edge node[above]{$d$} (m-1-3)
      (m-1-3) edge node[above]{$a$} (m-1-4);
  \end{tikzpicture}
  \ .
\]
Other names for the players would be ``Player'' and ``Opponent'' (games
semantics), ``Angel'' and ``Demon'' (process calculus), ``Alice'' and ``Bob''
(cryptography) etc.

\medbreak
Those games are slightly asymmetrical in that there is no actual state
between~$A$-moves and~$D$-moves. In particular, it is not obvious what form
the familiar~$A$/$D$ duality should take.
Many of the usual board games such as chess or go are more symmetric: given a
state, both players could make a move. Such games are more appropriately
represented by two spans over the same set:
\[
  \begin{tikzpicture}[baseline=(m-1-1.base)]
    \matrix (m) [matrix of math nodes, column sep=3em]
      { I & D & I & & I & A & I \\ };
    zc\path[morphism]
      (m-1-2) edge node[above]{$d$} (m-1-1)
      (m-1-2) edge node[above]{$n_d$} (m-1-3)
      (m-1-6) edge node[above]{$a$} (m-1-5)
      (m-1-6) edge node[above]{$n_a$} (m-1-7);
  \end{tikzpicture}
  \ .
\]
\begin{itemize}
  \item $I$ represents the set of states,
  \item for each state~$i\in I$, the fiber~$A(i)$ gives the available~$A$-moves and the
    function~$n_A$ gives the new state after such a move,
  \item for each state~$i\in I$, the fiber~$D(i)$ gives the available~$D$-moves and the
    function~$n_D$ gives the new state after such a move.
\end{itemize}
We get an asymmetric alternating version as above with the plain arrows
from the following construction:
\[
  \begin{tikzpicture}[baseline=(m-2-1.base)]
    \matrix (m) [matrix of math nodes, column sep={4em,between origins}, row
    sep={4em,between origins}, text height=1.5ex, text depth=.25ex]
      {   &&   & D \times_{\scriptscriptstyle\!I} A &   & A &   \\
        I && D &                                    & I &   & I \\   };
    \path[morphism,bend right=15]
      (m-1-4) edge (m-2-1);
    \path[morphism]
      (m-1-4) edge (m-1-6)
      (m-1-6) edge node[above right]{$a$} (m-2-7);
    \path[morphism,densely dashed]
      (m-1-4) edge (m-2-3)
      (m-2-3) edge node[below]{$n_d$} (m-2-1)
      (m-2-3) edge node[below]{$d$} (m-2-5)
      (m-1-6) edge node[below right]{$n_a$} (m-2-5);
    \pullback{m-2-3}{m-1-4}{m-1-6}{draw,-}
  \end{tikzpicture}
  \ .
\]
This kind of games could prove an interesting starting point for representing
Conway games \cite{Joyal}.
For those games, the opposite (or negation for games semantics) amounts to
interchanging the players by considering
\[
  \begin{tikzpicture}[baseline=(m-2-1.base)]
    \matrix (m) [matrix of math nodes, column sep={4em,between origins}, row
    sep={4em,between origins}, text height=1.5ex, text depth=.25ex]
      {   &&   & D \times_{\scriptscriptstyle\!I} A &   & A &   \\
        I && D &                                    & I &   & I \\   };
    \path[morphism,bend right=15]
      (m-1-4) edge (m-2-1);
    \path[morphism]
      (m-1-4) edge (m-1-6)
      (m-1-6) edge node[above right]{$d$} (m-2-7);
    \path[morphism,densely dashed]
      (m-2-3) edge node[below]{$n_a$} (m-2-1)
      (m-1-4) edge (m-2-3)
      (m-2-3) edge node[below]{$a$} (m-2-5)
      (m-1-6) edge node[below right]{$n_d$} (m-2-5);
    \pullback{m-2-3}{m-1-4}{m-1-6}{draw,-}
  \end{tikzpicture}
\]
which is \emph{not} the dual of section~\ref{sec:dual}.

\smallbreak
Another possibility suggested by Martin Hyland would be to consider games where
the two players play simultaneously. Those games are represented by a
slice~$a:A\to I$ for~$A$-moves and a slice~$d:D\to I$ for~$D$-moves, together with a
function~$n:A\times_{\scriptscriptstyle\!I}D \to I$ to get the next state.
As far as our games are concerned, it amounts to considering diagrams of the
form
\[
  \label{rk:HylandGames}
  \begin{tikzpicture}[baseline=(m-1-1.base)]
    \matrix (m) [matrix of math nodes, column sep=4em, text height=1.5ex, text depth=.25ex]
      {   I & A \times_{\scriptscriptstyle\!I} D & D  & I \\   };
    \path[morphism]
      (m-1-2) edge node[above]{$n$} (m-1-1)
      (m-1-2) edge node[above]{$\Delta_{d}(a)$} (m-1-3)
      (m-1-3) edge node[above]{$a$} (m-1-4);
  \end{tikzpicture}
  \ .
\]
In such games, the sets of counter moves~$D(a)$ depend only on~$i\in
I$ and not on the actual~$a\in A(i)$. Here again, there is a natural notion of
duality, different from the one from section~\ref{sec:dual}.


\subsection*{Strategies and Simulations}  
\label{sub:strategies}

For a game~$I\leftarrow D \rightarrow A \rightarrow I$ as above, a
``non-losing strategy for Alfred'' consists of:
\begin{itemize}
  \item a subset of ``good'' states~$H\subset I$,
  \item a function~$\alpha$ choosing an~$A$-move for each~$i\in H$,
  \item such that whenever~$i\in H$ and~$d\in D\big(\alpha(i)\big)$, we
    have~$i[\alpha(i)/d]\in H$.
\end{itemize}
This means that, as long as the games start in~$H$, Alfred always has a move
to play. Each game will either go on infinitely or stop when Dominic has no
counter-move available. Categorically speaking, such a strategy is described
by a diagram
\[
  \begin{tikzpicture}[baseline=(m-2-1.base)]
    \matrix (m) [matrix of math nodes, column sep={4em,between origins}, row
    sep={4em,between origins},text height=1.5ex, text depth=.25ex]
      { H & H{\cdotp}A & H &  \\
        I & D          & A & I\\};
    \path[morphism]
      (m-2-2) edge node[below]{$n$} (m-2-1)
      (m-2-2) edge node[below]{$d$} (m-2-3)
      (m-2-3) edge node[below]{$a$} (m-2-4);
    \path[injection,densely dashed]
      (m-1-3) edge node[above right]{$h$} (m-2-4)
      (m-1-1) edge node[left]{$h$} (m-2-1);
    \path[morphism,densely dashed]
      (m-1-2) edge node[above]{$\gamma$} (m-1-1)
      (m-1-2) edge (m-1-3)
      (m-1-3) edge node[left]{$\alpha$} (m-2-3)
      (m-1-2) edge (m-2-2);
    \pullback[.6cm]{m-2-2}{m-1-2}{m-1-3}{draw,-};
  \end{tikzpicture}
  \ .
\]
Dually, a ``non-losing strategy for Dominic'' is given by a diagram
\[
  \begin{tikzpicture}[baseline=(m-2-1.base)]
    \matrix (m) [matrix of math nodes, column sep={4em,between origins}, row
    sep={4em,between origins},text height=1.5ex, text depth=.25ex]
      { I & D  & A & I \\
        H & H' & H'& H \\};
    \path[morphism]
      (m-1-2) edge node[above]{$n$} (m-1-1)
      (m-1-2) edge node[above]{$d$} (m-1-3)
      (m-1-3) edge node[above]{$a$} (m-1-4);
    \path[injection,densely dashed]
      (m-2-1) edge node[left]{$h$} (m-1-1)
      (m-2-4) edge node[right]{$h$} (m-1-4);

    \path[morphism,densely dashed]
      (m-2-2) edge node[below]{$\gamma$} (m-2-1)
      (m-2-3) edge node[below]{$\alpha$} (m-2-4)
      (m-2-3) edge (m-1-3)
      (m-2-2) edge (m-1-2);
    \draw[double equal sign distance,double]
      (m-2-2) -- (m-2-3);
    \pullback[.6cm]{m-1-3}{m-2-3}{m-2-4}{draw,-};
  \end{tikzpicture}
  \ .
\]

The \emph{simulations} we will consider generalize both kind of strategies and
satisfy a property similar to what appears in the theory of labeled transition
systems, but with an additional layer of quantifiers to account for the
counter moves. To make a relation~$R\subseteq I_1\times I_2$ into a simulation
between games~$I_1\leftarrow D_1\rightarrow A_1\rightarrow I_1$
and~$I_2\leftarrow D_2\rightarrow A_2\rightarrow I_2$ as above, we need two
functions~$\alpha$ and~$\beta$ satisfying:
\begin{itemize}
  \item whenever $(i_1,i_2)\in R$ and $a_1\in A_1(i_1)$, there is a move
    $a_2\eqdef \alpha(a_1)\in A_2(i_2)$ which simulates~$a_1$ in the following
    sense;
  \item whenever $d_2\in D_2(a_2)$ is a response to~$a_2$, there is a response~$d_1\eqdef \beta(d_2)\in
  D_1(a_1)$, such that~$\big(i_1[a_1/d_1]\,,\,i_2[a_2/d_2]\big)\in R$.
\end{itemize}
Strategies for Alfred or Dominic are obtained by instantiating~$I_1\leftarrow
D_1\rightarrow A_1\rightarrow I_1$ or~$I_2\leftarrow D_2\rightarrow
A_2\rightarrow I_2$ to the trivial game~$\One\leftarrow \One\rightarrow
\One\rightarrow \One$, where~$\One=\{\star\}$ is the terminal object of~$\Set$.
The symmetric monoidal closed structure will in particular imply that a
simulation from~$G_1$ to~$G_2$ is simply a strategy (for Alfred) for a game
called~$G_1 \Linear G_2$.

\smallbreak
Note that the actual definition of simulation will be slightly more general in
that it considers arbitrary spans instead of relations (monic spans). For
strategies, considering some~$H\to I$ instead of a subset makes it possible
for the function~$\alpha$ to choose move depending on more than just a state.



\section{Preliminaries, Polynomials and Polynomial Functors}  
\label{section:prelim}

\subsection{Locally Cartesian Closed Categories} 

Some basic knowledge about locally cartesian closed categories and their internal
language (extensional dependent type theory) is assumed throughout the paper.
Here is a review of the notions (and notations) needed in the rest of the
paper.
For a category~$\C$ with finite limits, we write~``$\One$'' for its terminal
object and~``$A\times B$'' for the cartesian product of~$A$ and~$B$.
The ``pairing'' of~$f:C\to A$ and~$g:C\to B$ is written~$\langle f,g\rangle :
C\to A\times B$.

\smallbreak
If~$f:A\to B$ is a morphism, it induces a pullback functor~$\Delta_f$ from
slices over~$B$ to slices over~$A$. This functor has a left adjoint~$\Sigma_f$
which is simply ``pre-composition by~$f$''. When all the~$\Delta_f$s also
have a right adjoint, we say that~$\C$ is \emph{locally cartesian closed}.
The right adjoint is written~$\Pi_f$.  In all the sequel, $\C$ stands for a
category which is (at least) locally cartesian closed.  We thus have
\[
  \Sigma_f \quad\dashv\quad \Delta_f \quad\dashv\quad \Pi_f \quad .
\]
Additional requirements will be explicitly stated.

Besides the isomorphisms coming from the adjunctions, slices enjoy two
fundamental properties:
\begin{itemize}
  \item \emph{the Beck-Chevalley isomorphisms:}
    \[
      \Pi_g \, \Delta_l  \quad\iso\quad  \Delta_k \, \Pi_f
      \qquad\hbox{and}\qquad
      \Sigma_g \, \Delta_l  \quad\iso\quad  \Delta_k \, \Sigma_f
    \]
    whenever
    \[
      \begin{tikzpicture}[baseline=(m-1-1.base)]
        \matrix (m) [matrix of math nodes, row sep={4em,between origins},
        column sep={4em,between origins}]
          { \cdot & \cdot \\ \cdot & \cdot \\ };
        \path[morphism]
          (m-1-1) edge node[auto]{$g$} (m-1-2)
          (m-1-1) edge node[left]{$l$} (m-2-1)
          (m-2-1) edge node[below]{$f$} (m-2-2)
          (m-1-2) edge node[auto]{$k$} (m-2-2);
        \pullback[.5cm]{m-2-1}{m-1-1}{m-1-2}{draw,-};
      \end{tikzpicture}
    \]
    is a pullback,

  \item \emph{distributivity:} when $b:C\to B$ and $a:B\to
    A$, we have a commuting diagram
    \begin{equation}\label{diag:distr}
      \begin{tikzpicture}[baseline=(m-2-1.base)]
        \matrix (m) [matrix of math nodes, row sep={2.5em,between origins},
        column sep={2.5em,between origins}]
          {   & \cdot && \cdot \\
            C &       &&       \\
              &  B    && A     \\};
        \path[morphism]
          (m-1-2) edge node[auto]{$a'$} (m-1-4)
          (m-1-2) edge (m-3-2)
          (m-1-2) edge node[over]{$\epsilon$} (m-2-1)
          (m-2-1) edge node[below left]{$b$} (m-3-2)
          (m-3-2) edge node[below]{$a$} (m-3-4)
          (m-1-2) edge node[left]{$u'$} (m-3-2)
          (m-1-4) edge node[auto]{$u \eqdef \Pi_a(b)$} (m-3-4);
        \pullback[.5cm]{m-3-2}{m-1-2}{m-1-4}{draw,-};
      \end{tikzpicture}
    \end{equation}
    where~$\epsilon$ is the co-unit of~$\Delta_a\dashv \Pi_a$. For such a
    diagram, we have
    \[
      \Pi_a \, \Sigma_b \quad\iso\quad \Sigma_{u} \, \Pi_{a'} \, \Delta_\epsilon
      \ .
    \]

\end{itemize}

\medbreak
Moreover, any slice category~$\Sl\C I$ is canonically enriched over~$\C$ by
putting\label{sub:enrichment}
\[
  \Hom(x,y) \quad \eqdef \quad \Pi_I\Pi_x\Delta_x(y)
\]
whenever~$x,y\in\Sl\C I$. (Here,~$I$ also stands for the unique map from~$I$
to~$\One$.) Any~$\Sl\C I$ is also canonically tensored over~$\C$ by using the
left adjoint of~$\Hom(x,\BLANK)$:
\[
  A \odot x \quad \eqdef \quad \Sigma_x\Delta_x\Delta_I(A)
\]
for any~$x\in\Sl\C I$ and object~$A$.


\subsection{Dependent Type Theory} 

In~\cite{Seely}, Seely showed how an extensional version of Martin L\"of's
theory of dependent types \cite{ML84} could be regarded as the internal
language for locally cartesian closed categories. A little later, Hofmann
showed in~\cite{Hofmann} that Seely's interpretation works only ``up-to
canonical isomorphisms'' and proposed a solution.
Some of the proofs in this paper rely on the use of this internal language and
we use Seely's original interpretation. Strictly speaking, some of those
morphisms constructed with type theory should be composed with canonical
``substitution'' isomorphisms.

A type~$A$ in context~$\Gamma$, written~$\Gamma\vdash A$ is interpreted as a
morphism~$a:\Gamma_{\!A}\to\Gamma$, that is as an object in the slice over
(the interpretation of)~$\Gamma$. Then, a term of type~$A$ in context~$\Gamma$,
written~$\Gamma\vdash t:A$ is interpreted as a
morphism~$u:\Gamma\to\Gamma_{\!A}$ such that~$a u=1$, i.e., a section of (the
interpretation of) its type.  When~$A$ is a type in context~$\Gamma$, we
usually write~$A(\gamma)$ to emphasize the dependency on the context and we
silently omit irrelevant parameters.
If we write~$\inter{a}{\Gamma\vdash A}$ to mean that the interpretation
of type~$\Gamma\vdash A$ is~$a$, the main points of the Seely semantics are:
\[
  \Rule
  {\inter{a}{\Gamma \vdash A} \qquad \inter{b}{\Gamma,x:A\vdash B(x)}}
  {\inter{\Pi_a(b)}{\Gamma \vdash \prod_{x:A} B(x)}}
  {product}
  \ ,
\]
\[
  \Rule
  {\inter{a}{\Gamma \vdash A} \qquad \inter{b}{\Gamma,x:A\vdash B(x)}}
  {\inter{\Sigma_a(b)}{\Gamma \vdash \sum_{x:A} B(x)}}
  {sum}
  \ ,
\]
\[
  \Rule
  {\inter{f}{\Gamma \vdash \vec u:\Delta} \qquad \inter{u}{\Delta\vdash
  A(\vec x)}}
  {\inter{\Delta_f(u)}{\Gamma \vdash A(\vec u)}}
  {substitution}
  \ .
\]
Of particular importance is the distributivity condition~$(\ref{diag:distr})$
whose type theoretic version is an intensional version of the axiom of choice:
\begin{equation}\label{diag:AC}
  \Gamma \ \vdash \ \prod_{x:A} \ \sum_{y:B(x)} U(x,y)
  \qquad\iso\qquad
  \Gamma \ \vdash \ \sum_{f:\prod_{x:A} B(x)} \ \prod_{x:A} \ U\big(x,f(x)\big)
  \ .
\end{equation}


\subsection{Polynomials and Polynomial Functors} 

We now recall some definitions and results from~\cite{polyMonads} and refer to
the original article for historical notes, details about proofs and additional
comments.

\begin{defi}\label{def:polynomial}
  If $I$ and $J$ are objects of~$\C$, a (generalized) \emph{polynomial} from~$I$ to~$J$ is a
  diagram~$P$ in~$\C$ of the shape
  \[
    P \quad=\quad
    \begin{tikzpicture}[baseline=(m-1-1.base)]
      \matrix (m) [matrix of math nodes, column sep=3em]
      { I & D & A & J \\ };
      \path[morphism]
      (m-1-2) edge node[above]{$n$} (m-1-1)
      (m-1-2) edge node[above]{$d$} (m-1-3)
      (m-1-3) edge node[above]{$a$} (m-1-4);
    \end{tikzpicture}
    \ .
  \]
  We write~$\Poly_\C[I,J]$ for the collection of such polynomials from~$I$ to~$J$.
\end{defi}

\begin{defi}
  For each~$P\in\Poly_\C[I,J]$ as in Definition~\ref{def:polynomial}, there is
  an associated functor from~$\Sl\C I$ to~$\Sl\C J$ called the
  \emph{extension} of~$P$. It is also denoted by~$\Sem{P}$ and is defined as
  the following composition
  \[
    \Sem{P} \quad=\quad
    \begin{tikzpicture}[baseline=(m-1-1.base)]
      \matrix (m) [matrix of math nodes, column sep=3em]
      { \Sl\C I & \Sl\C D & \Sl\C A & \Sl\C J \\ };
      \path[morphism]
      (m-1-1) edge node[above]{$\Delta_n$} (m-1-2)
      (m-1-2) edge node[above]{$\Pi_d$} (m-1-3)
      (m-1-3) edge node[above]{$\Sigma_a$} (m-1-4);
    \end{tikzpicture}
    \ .
  \]
  Any functor isomorphic to the extension of a polynomial is called a
  \emph{polynomial functor}. We write~$\PolyFun_\C[I,J]$ for the collection of
  polynomial functors.
\end{defi}

The identity functor from~$\Sl\C I$ to itself is trivially the extension of the
polynomial
\[
  \begin{tikzpicture}[baseline=(m-1-1.base)]
    \matrix (m) [matrix of math nodes, column sep=3em]
    { I & I & I & I \\ };
    \path[morphism]
    (m-1-2) edge node[above]{$1$} (m-1-1)
    (m-1-2) edge node[above]{$1$} (m-1-3)
    (m-1-3) edge node[above]{$1$} (m-1-4);
  \end{tikzpicture}
\]
and it can be shown that polynomial functors compose, see~\cite{polyMonads}
for example. We thus obtain a category~$\PolyFun_\C$ where objects are slice
categories and morphisms are polynomial functors. We also obtain a
\emph{bicategory}~$\Poly_\C$ of polynomials: composition of polynomials is
associative only up-to canonical isomorphisms.

\begin{prop} \label{prop:PFConnectedLimits}
  Polynomial functors commute with connected limits.
\end{prop}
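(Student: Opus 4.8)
The plan is to prove Proposition~\ref{prop:PFConnectedLimits} by decomposing the extension $\Sem{P} = \Sigma_a \circ \Pi_d \circ \Delta_n$ and checking that each of the three functors in this composite preserves connected limits. Since a composition of limit-preserving functors preserves limits, this reduces the problem to three separate claims about the building blocks $\Delta$, $\Pi$, and $\Sigma$. The first two are essentially formal: $\Delta_n$ is a right adjoint (to $\Sigma_n$), hence preserves \emph{all} limits, and $\Pi_d$ is a right adjoint (to $\Delta_d$), hence also preserves all limits. So the entire content of the proposition is concentrated in the last factor, $\Sigma_a : \Sl\C A \to \Sl\C J$, which is \emph{not} a right adjoint and therefore requires a genuine argument; this is the step I expect to be the main obstacle.

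For $\Sigma_a$ the key point is that $\Sigma_a$ is given by post-composition with $a : A \to J$, and although it does not preserve the terminal object (so it cannot preserve all limits), it does preserve connected limits. The cleanest way I would argue this is via the following observation: a functor $\Sl\C A \to \Sl\C J$ preserves connected limits precisely when it does so after further composing with the forgetful (domain) functor $\Sl\C J \to \C$, because the forgetful functor from a slice creates connected limits (the terminal-object issue is exactly why \emph{connected} is needed). Now the composite $\Sl\C A \xrightarrow{\Sigma_a} \Sl\C J \xrightarrow{\mathrm{dom}} \C$ is the same as $\Sl\C A \xrightarrow{\mathrm{dom}} \C$, the domain functor out of $\Sl\C A$, which likewise creates connected limits. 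Hence $\mathrm{dom}\circ\Sigma_a$ preserves connected limits, and therefore so does $\Sigma_a$. I would make sure to spell out why a connected diagram in a slice has its limit computed on underlying objects: given a connected diagram $(x_i)_i$ in $\Sl\C A$ with legs $x_i : X_i \to A$, connectedness forces all the structure maps $X_i \to A$ to agree on the limit cone, so the limit in $\C$ of the $X_i$ carries a well-defined map to $A$ and this is the limit in the slice.

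Assembling the pieces: $\Delta_n$ preserves connected limits (being a right adjoint), $\Pi_d$ preserves connected limits (being a right adjoint), and $\Sigma_a$ preserves connected limits by the slice argument above; composing, $\Sem{P}$ preserves connected limits. Finally, since the statement concerns polynomial functors rather than extensions on the nose, I note that any functor isomorphic to one preserving connected limits also preserves them, so the result holds for all of $\PolyFun_\C$. An alternative to the slice/forgetful-functor argument for $\Sigma_a$ would be a direct computation in the internal dependent type theory, writing a connected limit as an equalizer-like construction and using that $\Sigma$ commutes with the relevant $\Sigma$'s and with pullbacks along maps into $A$; but I find the ``domain functor creates connected limits'' route shorter and less error-prone, so that is the one I would carry out.
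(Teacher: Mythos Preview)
The paper does not actually prove this proposition: it is stated in the preliminaries section as a result recalled from~\cite{polyMonads}, with the explicit remark that the reader should ``refer to the original article for \dots\ details about proofs''. So there is no in-paper proof to compare against.

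That said, your argument is correct and is essentially the standard one. The decomposition $\Sem{P}=\Sigma_a\Pi_d\Delta_n$ together with the fact that $\Delta_n$ and $\Pi_d$ are right adjoints reduces everything to $\Sigma_a$, and your treatment of $\Sigma_a$ via the domain functor is the usual route: the forgetful functor $\mathrm{dom}:\Sl\C X\to\C$ preserves connected limits and is conservative, and since $\mathrm{dom}_J\circ\Sigma_a=\mathrm{dom}_A$ preserves connected limits, conservativity of $\mathrm{dom}_J$ forces the comparison map out of $\Sigma_a(\lim D)$ to be an isomorphism. One small wording issue: you write that $\Sigma_a$ preserves connected limits ``precisely when'' $\mathrm{dom}_J\circ\Sigma_a$ does; the direction you need is only the backward one, and that is exactly what conservativity plus preservation by $\mathrm{dom}_J$ gives you. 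With that clarified, the proof is complete.
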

The simplest example of functor which is not polynomial is the \emph{finite
multiset functor}, seen as a functor from~$\Sl\Set\One \iso \Set$ to itself:
this functor doesn't commute with connected limits.
When~$\C$ is~$\Set$, the converse of proposition~\ref{prop:PFConnectedLimits}
also holds, giving a more ``extensional'' characterization of polynomial
functors:
\begin{prop}
  A functor $P:\Sl\Set I \to \Sl\Set J$ is polynomial iff it commutes with all
  connected limits.
\end{prop}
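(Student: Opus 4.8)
The plan is to prove both directions separately, with the ``only if'' direction being Proposition~\ref{prop:PFConnectedLimits} (which we may assume), so that the real content is the converse: a connected-limit-preserving functor $P:\Sl\Set I\to\Sl\Set J$ is (isomorphic to the extension of) a polynomial. First I would reconstruct the three pieces $a:A\to J$, $d:D\to A$ and $n:D\to I$ of the candidate polynomial from $P$ directly, using the fact that a family in $\Sl\Set J$ is a coproduct of its fibres and that $P$, while it need not preserve coproducts, does preserve pullbacks and terminal-object-like connected limits fibrewise. Concretely, fix $j\in J$; the fibre of $PX$ over $j$ is a set naturally dependent on $X\in\Sl\Set I$, and I want to show this assignment is itself a polynomial functor $\Sl\Set I\to\Set$, i.e.\ of the form $X\mapsto\sum_{a\in A(j)}\prod_{d\in D(a)}X(n(d))$. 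Doing this fibrewise over all $j$ simultaneously then assembles the global polynomial.

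So the core lemma to isolate is: a functor $Q:\Sl\Set I\to\Set$ that preserves connected limits has the shape $X\mapsto \sum_{a\in A}\prod_{d\in D(a)} X(n_a(d))$ for some set $A$, a family $D$ over $A$, and a map $n$ from $\sum_{a}D(a)$ to $I$. Here I would take $A\eqdef Q(1_I)$, where $1_I$ is the terminal object of $\Sl\Set I$ (the identity family), since every $X$ has a unique map to $1_I$ and hence every element of $QX$ has a well-defined ``shape'' in $A$; this makes $Q$ split as a coproduct $\coprod_{a\in A}Q_a$ of the fibres $Q_a(X)\eqdef\{\,t\in QX : Q(!)(t)=a\,\}$. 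Each $Q_a$ preserves connected limits and in addition preserves the terminal object (it sends $1_I$ to a singleton), hence preserves \emph{all} finite limits, and more generally all limits of connected-plus-terminal shape. The classical argument (this is exactly the set-theoretic half of the container/indexed-container representation theorem, cf.\ \cite{cont,indexed-containers}) is then that a limit-preserving functor out of a presheaf category $\Sl\Set I\simeq\Set^{|I|}$ that also preserves filtered colimits is representable, and more carefully, without any colimit hypothesis, that a connected-limit- and terminal-object-preserving $Q_a$ is representable by the family $D(a)\eqdef \coprod_{i\in I} Q_a$ applied to suitable ``generic'' elements. I would make this precise by evaluating $Q_a$ at the subterminal families $U\subseteq I$ and at the coproduct-injection-type families; because $Q_a$ preserves the pullbacks that express $X$ as glued from its fibres, the value $Q_a(X)$ is forced to be $\prod_{i}\bigl(\text{something}\bigr)$, and the exponent is read off as $D(a)$ with the displacement map $n$ recording which fibre of $I$ each coordinate lives over.

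The main obstacle, and the step I expect to need the most care, is precisely justifying that $Q_a$ is \emph{representable in the dependent sense} — that preservation of connected limits (not all limits, and with no colimit assumption) suffices to pin down the exponent set $D(a)$ and the map $n$ uniquely, rather than merely up to some quotient. The trick is that although $Q_a$ need not preserve the empty limit or arbitrary products, it does preserve all \emph{wide pullbacks} (connected limits), and $\Sl\Set I$ has enough of these: one recovers $D(a)$ as the set of natural transformations from $Q_a$ to the evaluation-at-$i$ functors, summed over $i\in I$, and shows by a Yoneda-style computation using connected limits that the canonical comparison $Q_a(X)\to\prod_{d\in D(a)}X(n(d))$ restricted to the image is a bijection. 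Once the fibrewise representation is established with the correct naturality in $j\in J$ as well, assembling $A\eqdef\coprod_{j}A(j)$ over $J$, $D$ over $A$, and the span $I\leftarrow D\rightarrow A\rightarrow J$ is routine, and the resulting polynomial's extension $\Delta_n$ followed by $\Pi_d$ followed by $\Sigma_a$ is, unwinding the definitions, exactly $X\mapsto\bigl(j\mapsto\sum_{a\in A(j)}\prod_{d\in D(a)}X(n(d))\bigr)$, i.e.\ isomorphic to $P$. I would remark that the essential use of $\C=\Set$ is in the first reduction step: the decomposition of an arbitrary object of $\Sl\Set I$ as a connected limit (a wide pullback of subterminals and fibre-inclusions) built over its fibres, which fails in a general locally cartesian closed category and is the reason the proposition is stated only for $\Set$.
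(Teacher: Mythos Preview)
The paper does not actually prove this proposition: it is stated and the reader is referred to~\cite{polyMonads} for this and the other characterisations of polynomial functors on~$\Set$. So there is no in-paper proof to compare your attempt against.

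That said, your outline is essentially the standard argument one finds in the cited literature: set $A\eqdef P(1_I)$, split $P$ fibrewise over $j\in J$ and then over $a\in A_j$ into functors $Q_a:\Sl\Set I\to\Set$, and show each $Q_a$ is representable. Your ``main obstacle'' paragraph, however, is fighting a phantom. You have already observed that each $Q_a$ preserves connected limits and sends the terminal object to a singleton; but connected limits together with the terminal object generate \emph{all} small limits: any product $\prod_k X_k$ is the wide pullback of the $X_k$ over $1$, and any limit decomposes as a product of limits over its connected components. Hence each $Q_a$ is a small-limit-preserving functor out of the locally presentable category $\Sl\Set I\simeq\Set^{I}$, and the special adjoint functor theorem yields a left adjoint, whence $Q_a\iso\Sl\Set I\big(D(a),{-}\big)$ for some $D(a)\in\Sl\Set I$; the structure map of $D(a)$ as an object over $I$ \emph{is} the piece of $n$ you are looking for. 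There is no missing colimit hypothesis and no need for the more delicate wide-pullback/Yoneda extraction you sketch. With this simplification the assembly over $j\in J$ goes through exactly as you describe. Your closing remark about where $\C=\Set$ is essential is slightly off target: the genuinely $\Set$-specific step is the appeal to SAFT (local presentability of $\Sl\Set I$), not a decomposition of objects as connected limits.
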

There are several other characterizations of polynomial functors on~$\Set$,
all nicely summarized in~\cite{polyMonads}.

%
%


\section{Polynomials and Simulations: SMCC Structure}  
\label{section:PE}

We will now construct a category where polynomials play the r\^ole of objects.
More precisely, we will consider ``endo-polynomials'', i.e., diagrams of the form
\[
  \begin{tikzpicture}[baseline=(m-1-1.base)]
    \matrix (m) [matrix of math nodes, column sep=3em]
    { I & D & A & I \\ };
    \path[morphism]
    (m-1-2) edge node[above]{$n$} (m-1-1)
    (m-1-2) edge node[above]{$d$} (m-1-3)
    (m-1-3) edge node[above]{$a$} (m-1-4);
  \end{tikzpicture}
  \ .
\]
We simply call such a diagram a \emph{polynomial over~$I$} and think of
them as games, as described on page~\pageref{sub:games}.

\subsection{Simulations} 
\label{sub:simulations}

The morphisms between two such polynomials over~$I$ and~$J$ will be spans
between~$I$ and~$J$, with some additional structure.
\begin{defi}\label{def:simulationAsNatTransformation}
  If $P_1$ and $P_2$ are two polynomial functors over~$I_1$ and~$I_2$
  respectively, a \emph{simulation from~$P_1$
  to~$P_2$} is a diagram
  \[
    \begin{tikzpicture}[baseline=(m-3-2)]
      \matrix (m) [matrix of math nodes, column sep={3.5em,between origins}, row
      sep={3.5em,between origins},text height=1.5ex, text depth=.25ex]
      {P_1: & I_1 & D_1 & A_1 & I_1 \\
            & R &R{\cdotp}D_2&R{\cdotp}A_1& R \\
       P_2: & I_2 & D_2 & A_2 & I_2 \\};
      \path[morphism]
      (m-1-3) edge node[above]{$n_1$} (m-1-2)
      (m-1-3) edge node[above]{$d_1$} (m-1-4)
      (m-1-4) edge node[above]{$a_1$} (m-1-5)
      (m-3-3) edge node[below]{$n_2$} (m-3-2)
      (m-3-3) edge node[below]{$d_2$} (m-3-4)
      (m-3-4) edge node[below]{$a_2$} (m-3-5);
      \path[morphism,densely dashed]
      (m-2-2) edge node[left]{$r_1$} (m-1-2)
      (m-2-2) edge node[left]{$r_2$} (m-3-2)
      (m-2-5) edge node[right]{$r_1$} (m-1-5)
      (m-2-5) edge node[right]{$r_2$} (m-3-5)
      (m-2-3) edge node[above]{$\gamma$} (m-2-2)
      (m-2-3) edge (m-2-4)
      (m-2-4) edge (m-2-5)
      (m-2-3) edge node[right]{$\beta$} (m-1-3)
      (m-2-3) edge (m-3-3)
      (m-2-4) edge (m-1-4)
      (m-2-4) edge node[right]{$\alpha$} (m-3-4);
      \pullback[.6cm]{m-1-4}{m-2-4}{m-2-5}{draw,-};
      \pullback[.6cm]{m-3-3}{m-2-3}{m-2-4}{draw,-};
    \end{tikzpicture}
    \ .
  \]
\end{defi}
We can internalize the notion of simulation using the language
of dependent types:
\begin{prop}\label{prop:TTsim}
  A simulation from~$P_1$ to~$P_2$ is given by: (refer to
  Definition~\ref{def:simulationAsNatTransformation})
  \begin{enumerate}
    \item $i_1:I_1, i_2:I_2 \vdash R(i_1,i_2)$ for the span,

    \item $i_1, i_2, r:R({i_1,i_2}), a_1:A_1(i_1) \vdash \alpha(i_1,i_2,r,a_1)
      : A_2(i_2)$ for the morphism~$\alpha$,

    \item $i_1, i_2, r, a_1, d_2:D_2\big(i_2,\alpha(...)\big) \vdash
      \beta(i_1,i_2,r,a_1,d_2) : D_1(i_1,a_1)$ for the morphism~$\beta$,

    \item $i_1, i_2, r, a_1, d_2:D_2\big(i_2,\alpha(...)\big) \vdash
      \gamma(...,d_2) : R\big(i_2[a_1/\beta(...)],i_2[\alpha(...)/d_2]\big)$ for
      the morphism~$\gamma$,

  \end{enumerate}
  Putting all this together and rewriting it more concisely, a simulation is given by:
  \begin{itemize}
    \item $i_1:I_1, i_2:I_2 \vdash R$ for the span,
    \item $i_1, i_2 \vdash \pi : \prod_{a_1}
      \sum_{a_2} \prod_{d_2} \sum_{d_1} \
      R\big(i_1[a_1/d_1],i_2[a_2/d_2]\big)$, where for~$k=1,2$ the types are~$a_k:A_k(i_k)$
      and~$d_k:D_k(i_k,a_k)$.
  \end{itemize}
\end{prop}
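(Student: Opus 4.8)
The plan is to translate the categorical diagram of Definition~\ref{def:simulationAsNatTransformation} into the internal dependent type theory via Seely's dictionary ($\Sigma_f$ for dependent sum, $\Pi_f$ for dependent product, $\Delta_f$ for substitution). First I would name the data. The span $I_1\xleftarrow{r_1}R\xrightarrow{r_2}I_2$ is an object of $\Sl\C{I_1\times I_2}$, hence a type $i_1:I_1, i_2:I_2\vdash R(i_1,i_2)$, which is item~(1). The two pullback squares marked in the middle of the diagram then identify the middle-row objects: $R{\cdotp}A_1$ is $R\times_{I_1}A_1$, which by its universal property is exactly the total space of $a_1:A_1(i_1)$ in the extended context $i_1,i_2,r:R(i_1,i_2)$; and $R{\cdotp}D_2$ is $(R{\cdotp}A_1)\times_{A_2}D_2$ along $\alpha$ and $d_2$, i.e.\ the total space of $d_2:D_2\big(i_2,\alpha(\dots)\big)$ in the context $i_1,i_2,r,a_1$. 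So the three rows of the diagram are literally the composites of $\Delta$, $\Pi$, $\Sigma$ that one reads off the fibres.

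Next I would read off the three dashed morphisms $\alpha,\beta,\gamma$ as terms. Each is a morphism of slices, hence a section of (the $\Delta$-pullback of) its codomain, and the commuting faces of the diagram pin down \emph{which} object it is a section of, i.e.\ its type and the states it lands over. The triangle $R{\cdotp}A_1\xrightarrow{\alpha}A_2\xrightarrow{a_2}I_2$ versus $R{\cdotp}A_1\to R\xrightarrow{r_2}I_2$ says $\alpha$ lies over $I_2$, giving $i_1,i_2,r,a_1\vdash\alpha:A_2(i_2)$, which is item~(2). The face through $D_1\xrightarrow{d_1}A_1$ forces $\beta$ to lie over $A_1$, so $i_1,i_2,r,a_1,d_2\vdash\beta:D_1(i_1,a_1)$, item~(3). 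Finally, compatibility of $\gamma:R{\cdotp}D_2\to R$ with $r_1$ and $r_2$ (using the abbreviation $i[a/d]=n(d)$) forces the codomain type to be $R$ evaluated at the updated states, so $\gamma:R\big(i_1[a_1/\beta(\dots)],i_2[\alpha(\dots)/d_2]\big)$, which is item~(4) (the first argument of $R$ in the statement should of course read $i_1[\cdots]$, not $i_2[\cdots]$).

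For the concise reformulation I would simply bundle these four pieces using the type-theoretic axiom of choice, i.e.\ the distributivity isomorphism~$(\ref{diag:AC})$, equivalently diagram~$(\ref{diag:distr})$. The pair $(\beta,\gamma)$ in context $i_1,i_2,r,a_1,d_2$ is a term of $\sum_{d_1:D_1(i_1,a_1)}R\big(i_1[a_1/d_1],i_2[\alpha/d_2]\big)$; quantifying over $d_2$ yields $\prod_{d_2}\sum_{d_1}R(\dots)$; pairing with $a_2\eqdef\alpha$ and quantifying over $a_1$ produces $\pi:\prod_{a_1}\sum_{a_2}\prod_{d_2}\sum_{d_1}R(\dots)$ in context $i_1,i_2,r$. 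Conversely, any such $\pi$ unpacks, by the same isomorphisms, into $\alpha,\beta,\gamma$ of exactly the required types, and the two assignments are mutually inverse.

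The only real care needed — and the step I expect to be a nuisance rather than a genuine obstacle — is coherence bookkeeping: as flagged in the Dependent Type Theory subsection, Seely's interpretation is only valid up to canonical substitution isomorphisms, so strictly the identifications ``$R{\cdotp}A_1=$ total space of $a_1:A_1(i_1)$'' and ``$R{\cdotp}D_2=$ total space of $d_2:D_2(i_2,\alpha)$'' must be mediated by those canonical isos, and one must check that the commuting faces of the diagram translate to the displayed typing judgements modulo them. The Beck--Chevalley isomorphisms guarantee this matching is well defined; beyond that the proof is a routine unwinding with no further mathematical content.
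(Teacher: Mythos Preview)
Your proposal is correct and follows essentially the same route as the paper: both translate the simulation diagram into type theory via Seely's dictionary, identifying the middle-row objects as the relevant context extensions and the dashed arrows as sections of the appropriate pulled-back types. The paper merely spells out in more detail the one step you summarise as ``a morphism of slices, hence a section of (the $\Delta$-pullback of) its codomain'': it exhibits explicitly, for~$\alpha$, the bijection between morphisms $A_1{\cdotp}R\to A_2$ over~$I_2$ and sections of the pullback $f:A_1{\cdotp}R{\cdotp}A_2\to A_1{\cdotp}R$, then declares the remaining cases similar.
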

\begin{proof}
  We'll only show the beginning in order to give a taste of the manipulations
  involved.
  Let's first fix some notation: the simulation is given by the diagram
  \[
    \begin{tikzpicture}[baseline=(m-3-2.base)]
      \matrix (m) [matrix of math nodes, column sep={4em,between origins},
      row sep={4em,between origins}, text height=1.5ex,text depth=.25ex]
      { I_1 & D_1 & A_1 & I_1 \\
          R & \cdotp & A_1{\cdotp}R& R \\
        I_2 & D_2 & A_2 & I_2 \\};
      \path[morphism]
      (m-1-2) edge node[above]{$n_1$} (m-1-1)
      (m-1-2) edge node[above]{$d_1$} (m-1-3)
      (m-1-3) edge node[above]{$a_1$} (m-1-4)
      (m-3-2) edge node[below]{$n_2$} (m-3-1)
      (m-3-2) edge node[below]{$d_2$} (m-3-3)
      (m-3-3) edge node[below]{$a_2$} (m-3-4);
      \path[morphism,densely dashed]
      (m-2-1) edge node[left]{$s$} (m-1-1)
      (m-2-1) edge node[left]{$t$} (m-3-1)
      (m-2-4) edge node[right]{$s$} (m-1-4)
      (m-2-4) edge node[right]{$t$} (m-3-4)
      (m-2-3) edge node[left]{$y$} (m-1-3)
      (m-2-3) edge node[below]{$x$} (m-2-4)
      (m-2-2) edge node[above]{$\gamma$} (m-2-1)
      (m-2-2) edge (m-2-3)
      (m-2-3) edge (m-2-4)
      (m-2-2) edge node[right]{$\beta$} (m-1-2)
      (m-2-2) edge (m-3-2)
      (m-2-3) edge (m-1-3)
      (m-2-3) edge node[left]{$\alpha$} (m-3-3);
      \pullback[.5cm]{m-1-3}{m-2-3}{m-2-4}{draw,-};
      \pullback[.5cm]{m-3-2}{m-2-2}{m-2-3}{draw,-};
    \end{tikzpicture}
    \ .
  \]
  The following pullbacks will be useful in the sequel:
  \[
    \begin{tikzpicture}[baseline=(m-2-2.base)]
      \matrix (m) [matrix of math nodes, column sep={5em,between origins}, row
      sep={5em,between origins},text height=1.5ex, text depth=.25ex]
      { A_1{\cdotp}R& A_1\times I_2 & A_1 \\
             R      & I_1\times I_2 & I_1 \\};
      \path[morphism]
      (m-1-1) edge node[above]{$\langle y,tx\rangle$} (m-1-2)
      (m-1-2) edge node[left]{$a_1{\times}1$} (m-2-2)
      (m-1-1) edge node[left]{$x$} (m-2-1)
      (m-2-1) edge node[below]{$\langle s,t\rangle$} (m-2-2)
      (m-1-2) edge node[above]{$\pi_1$} (m-1-3)
      (m-1-3) edge node[right]{$a_1$} (m-2-3)
      (m-2-2) edge node[below]{$\pi_1$} (m-2-3);
      \pullback[.5cm]{m-2-1}{m-1-1}{m-1-2}{draw,-}
      \pullback[.5cm]{m-2-2}{m-1-2}{m-1-3}{draw,-}
    \end{tikzpicture}
    \qquad
%
    \begin{tikzpicture}[baseline=(m-2-2.base)]
      \matrix (m) [matrix of math nodes, column sep={5em,between origins}, row
      sep={5em,between origins},text height=1.5ex, text depth=.25ex]
      { A_1{\cdotp}R{\cdotp}A_2 & A_1{\cdotp}R & A_2 \\
                   A_1{\cdotp}R & R          & I_2 \\};
      \path[morphism]
      (m-1-1) edge node[above]{$u$} (m-1-2)
      (m-1-2) edge node[above]{$v$} (m-1-3)
      (m-1-2) edge (m-2-2)
      (m-1-1) edge node[left]{$f$} (m-2-1)
      (m-2-1) edge node[below]{$x$} (m-2-2)
      (m-1-3) edge node[right]{$a_2$} (m-2-3)
      (m-2-2) edge node[below]{$t$} (m-2-3);
      \pullback[.5cm]{m-2-1}{m-1-1}{m-1-2}{draw,-}
      \pullback[.5cm]{m-2-2}{m-1-2}{m-1-3}{draw,-}
    \end{tikzpicture}
    \ .
  \]
  Interpreting the above types in~$\C$ gives:
  \begin{enumerate}
    \item $\vdash I_1$ is ``$I_1$'',
    \item $i_1:I_1\vdash I_2$ is ``$\pi_1:I_1\times I_2 \to I_1$'',
    \item $i_1,i_2:I_2 \vdash R(i_1,i_2)$ is ``$\langle s,t\rangle : R\to I_1\times I_2$'',
    \item $i_1,i_2,r:R(i_1,i_2) \vdash A_1(i_1)$ is ``$x:A_1{\cdotp}R\to R$'',
    \item $i_1,i_2,r,a_1:A_1(i_1) \vdash A_2(i_2)$ is ``$f : A_1{\cdotp}R{\cdotp}A_2 \to
      A_1{\cdotp}R$''.
  \end{enumerate}
  The term $i_1, i_2, r, a_1 \vdash \alpha(i_1,i_2,r,a_1) : A_2(i_2)$ thus
  corresponds to a section~$\varphi$ of~$f$.

  The sections of~$f$ are in 1-1 correspondence with the
  morphisms~$\alpha:A_1{\cdotp}R\to A_2$ s.t.~$tx=b\alpha$: given such
  an~$\alpha$, construct~$\gamma$ as the mediating arrow in
  \[
    \begin{tikzpicture}[baseline=(m-3-3.base)]
      \matrix (m) [matrix of math nodes, column sep={4em,between origins}, row
      sep={4em,between origins},text height=1.5ex, text depth=.25ex]
      { A_1{\cdotp}R &            &        &   \\
                     & A_1{\cdotp}R{\cdotp}A_2 & \cdotp & A_2 \\
                     & A_1{\cdotp}R          & \cdotp & I_2 \\};
      \path[morphism]
      (m-2-2) edge node[above]{$u$} (m-2-3)
      (m-2-3) edge node[above]{$v$} (m-2-4)
      (m-2-2) edge node[left]{$f$} (m-3-2)
      (m-3-2) edge node[below]{$x$} (m-3-3)
      (m-2-4) edge node[right]{$b$} (m-3-4)
      (m-3-3) edge node[below]{$t$} (m-3-4);
      \path[morphism,bend left=25]
      (m-1-1) edge node[above]{$\alpha$} (m-2-4);
      \path[morphism,bend left=-35]
      (m-1-1) edge node[left]{$1$} (m-3-2) ;
      \path[morphism,dashed]
      (m-1-1) edge node[above right]{$\gamma$} (m-2-2);
      \pullback[.5cm]{m-3-2}{m-2-2}{m-2-3}{draw,-}
    \end{tikzpicture}
    \ .
  \]
  The lower triangle shows that this~$\gamma$ is a section.

  The inverse of this construction is given by~$\gamma\mapsto vu\gamma$. Because of
  the upper triangle above, this is indeed a left inverse of the ``mediating
  arrow'' transformation.
  This is also a right inverse: let~$\gamma$ be s.t.~$f\gamma=\id$. We
  have~$bvu\gamma = txf\gamma = tx$ and so, if we use~$\alpha \eqdef vu\gamma$ in the above
  diagram, the mediating arrow will necessarily be~$\gamma$.

  The rest is similar.
\end{proof}

\medbreak
Defining the composition of simulations using type theory isn't too difficult,
but here is the diagrammatic representation of such a composition:
\[
  \begin{tikzpicture}[baseline=(m-4-8.base)]
    \matrix (m) [matrix of math nodes, column sep={2em,between origins}, row
    sep={2.5em,between origins}, text height=1.5ex,text depth=0.25ex]
    {   &   &I_1&   &   &   &D_1&   &   &   &A_1&   &   &   &I_1&   \\
        &   &   &   &   &   &   &   &   &   &   &   &   &   &   &   \\
        &   & R &   &   &   & Y &   &   &   & X &   &   &   & R &   \\
        &   &   &I_2&   &   &   &D_2&   &   &   &A_2&   &   &   &I_2\\
      T &   &   &   & V &   &   &   & U &   &   &   & T &   &   &   \\
        &R' &   &   &   &Y' &   &   &   &X' &   &   &   &R' &   &   \\
        &   &   &   &   &   &   &   &   &   &   &   &   &   &   &   \\
        &I_3&   &   &   &D_3&   &   &   &A_3&   &   &   &I_3&   &   \\};
    \foreach \i in {1,3}{
      \path[morphism]
      (m-\i-7) edge (m-\i-3)
      (m-\i-7) edge (m-\i-11)
      (m-\i-11) edge (m-\i-15);
    }
    \foreach \i in {6,8}{
      \path[morphism]
      (m-\i-6) edge (m-\i-2)
      (m-\i-6) edge (m-\i-10)
      (m-\i-10) edge (m-\i-14);
    }
    \path[morphism]
    (m-3-3) edge (m-1-3)
    (m-3-7) edge (m-1-7)
    (m-3-11) edge (m-1-11)
    (m-3-15) edge (m-1-15);
    \path[morphism]
    (m-6-2) edge (m-8-2)
    (m-6-6) edge (m-8-6)
    (m-6-10) edge (m-8-10)
    (m-6-14) edge (m-8-14);
    \path[morphism]
    (m-3-3) edge (m-4-4)
    (m-3-7) edge (m-4-8)
    (m-3-11) edge (m-4-12)
    (m-3-15) edge (m-4-16);
    \path[morphism,densely dashed]
    (m-5-1) edge (m-3-3)
    (m-5-5) edge (m-3-7)
    (m-5-9) edge (m-3-11)
    (m-5-13) edge (m-3-15);
    \path[morphism,densely dashed]
    (m-5-1) edge (m-6-2)
    (m-5-5) edge (m-6-6)
    (m-5-9) edge (m-6-10)
    (m-5-13) edge (m-6-14);
    \path[morphism,densely dashed]
    (m-5-5) edge (m-5-1)
    (m-5-5) edge (m-5-9)
    (m-5-9) edge (m-5-13);
    \path[morphism]
    (m-4-8) edge[crossover] (m-4-4)
    (m-4-8) edge[crossover] (m-4-12)
    (m-4-12) edge[crossover] (m-4-16);
    \path[morphism]
    (m-6-2) edge[crossover] (m-4-4)
    (m-6-6) edge[crossover] (m-4-8)
    (m-6-10) edge[crossover] (m-4-12)
    (m-6-14) edge (m-4-16);

    \pullback{m-1-11}{m-3-11}{m-3-15}{draw,-};
    \pullback{m-3-11}{m-3-7}{m-4-8}{draw,-};
    \pullback{m-6-10}{m-6-6}{m-8-6}{draw,-};
    \pullback{m-6-14}{m-6-10}{m-4-12}{draw,-};

    \pullback{m-3-3}{m-5-1}{m-6-2}{crossover,draw,-}
    \pullback{m-3-7}{m-5-5}{m-6-6}{crossover,draw,-}
    \pullback{m-3-11}{m-5-9}{m-6-10}{crossover,draw,-}
    \pullback{m-3-15}{m-5-13}{m-6-14}{draw,-}
  \end{tikzpicture}
  \ .
\]
The plain arrows show the polynomials~$P_1$,~$P_2$ and~$P_3$ and the initial
two simulation diagrams.  The dashed diagonal arrows are computed from the two
simulations by pullbacks and the dashed horizontal arrows are mediating
morphisms. To show that the ``background layer'' forms a simulation from~$P_1$
to~$P_3$, we need to show that the squares~$(U,A_1,I_1,T)$ and~$(V,U,A_3,D_3)$
are pullbacks.  For~$(U,A_1,I_1,T)$, we know by the pullback lemma
that~$(U,X,I_2,R')$ is a pullback by pasting~$(U,X,A_2,X')$
and~$(X',A_2,I_2,R')$. A second application of the pullback lemma
on~$(U,X,R,T)$ and~$(T,R,I_2,R,)$ shows that~$(U,X,R,T)$ is also a pullback.
Finally, a third application shows that~$(U,A_1,I_1,T)$ is a pullback, as
expected.  The same reasoning shows that the square~$(V,U,A_3,D_3)$ is also a
pullback.

\medbreak
As in the case of spans, composition of simulations is only associative up to
isomorphism. We will thus need to consider equivalence classes of such
simulations. Two simulations
\[
  \begin{tikzpicture}[baseline=(m-2-2)]
    \matrix (m) [matrix of math nodes, column sep={3em,between origins}, row
    sep={3em,between origins},text height=1.5ex, text depth=.25ex]
    {I_1 & D_1 & A_1 & I_1 \\
       R &\cdotp&\cdotp& R \\
     I_2 & D_2 & A_2 & I_2\\};
    \path[morphism]
    (m-1-2) edge (m-1-1)
    (m-1-2) edge (m-1-3)
    (m-1-3) edge (m-1-4)
    (m-3-2) edge (m-3-1)
    (m-3-2) edge (m-3-3)
    (m-3-3) edge (m-3-4);
    \path[morphism]
    (m-2-1) edge node[left]{$r_1$} (m-1-1)
    (m-2-1) edge node[left]{$r_2$} (m-3-1)
    (m-2-4) edge node[right]{$r_1$} (m-1-4)
    (m-2-4) edge node[right]{$r_2$} (m-3-4)
    (m-2-2) edge node[below]{$\gamma$} (m-2-1)
    (m-2-2) edge (m-2-3)
    (m-2-3) edge (m-2-4)
    (m-2-2) edge node[left]{$\beta$}  (m-1-2)
    (m-2-2) edge (m-3-2)
    (m-2-3) edge (m-1-3)
    (m-2-3) edge node[right]{$\alpha$} (m-3-3);
    \pullback[.4cm]{m-1-3}{m-2-3}{m-2-4}{draw,-};
    \pullback[.4cm]{m-3-2}{m-2-2}{m-2-3}{draw,-};
  \end{tikzpicture}
  \quad\hbox{and}\quad
  \begin{tikzpicture}[baseline=(m-2-2)]
    \matrix (m) [matrix of math nodes, column sep={3em,between origins}, row
    sep={3em,between origins},text height=1.5ex, text depth=.25ex]
    {I_1 & D_1 & A_1 & I_1 \\
      R' &\cdotp&\cdotp& R' \\
     I_2 & D_2 & A_2 & I_2\\};
    \path[morphism]
    (m-1-2) edge (m-1-1)
    (m-1-2) edge (m-1-3)
    (m-1-3) edge (m-1-4)
    (m-3-2) edge (m-3-1)
    (m-3-2) edge (m-3-3)
    (m-3-3) edge (m-3-4);
    \path[morphism]
    (m-2-1) edge node[left]{$r'_1$} (m-1-1)
    (m-2-1) edge node[left]{$r'_2$} (m-3-1)
    (m-2-4) edge node[right]{$r'_1$} (m-1-4)
    (m-2-4) edge node[right]{$r'_2$} (m-3-4)
    (m-2-2) edge node[below]{$\gamma'$} (m-2-1)
    (m-2-2) edge (m-2-3)
    (m-2-3) edge (m-2-4)
    (m-2-2) edge node[left]{$\beta'$}  (m-1-2)
    (m-2-2) edge (m-3-2)
    (m-2-3) edge (m-1-3)
    (m-2-3) edge node[right]{$\alpha'$} (m-3-3);
    \pullback[.4cm]{m-1-3}{m-2-3}{m-2-4}{draw,-};
    \pullback[.4cm]{m-3-2}{m-2-2}{m-2-3}{draw,-};
  \end{tikzpicture}
\]
are equivalent if there are isomorphisms making the following diagram
commute:
\[
  \begin{tikzpicture}[baseline=(m-2-4.base)]
    \matrix (m) [matrix of math nodes, column sep={2em,between origins}, row
    sep={4em,between origins}, text height=1.5ex,text depth=0.25ex]
    {   &   &I_1&   &   &   &D_1&   &   &   &A_1&   &   &   &I_1&   \\
        &   &   & R'&   &   &  &\cdot&  &   &  &\cdot&  &   &   & R'\\
     R  &   &   &  &\cdot&  &   &  &\cdot&  &   &   &R  &   &   &   \\
        &I_1&   &   &   &D_2&   &   &   &A_2&   &   &   &I_2&   &   \\};

    \pullback{m-1-11}{m-2-12}{m-2-16}{draw,-}
    \pullback{m-2-12}{m-2-8}{m-4-6}{draw,-}
    \pullback{m-1-11}{m-3-9}{m-3-13}{draw,-}
    \pullback{m-3-9}{m-3-5}{m-4-6}{draw,-}
    \draw[draw=white,line width=6pt]
      (m-2-12) -- (m-3-9);
    \path[morphism,densely dashed]
      (m-3-1) edge node[over,sloped]{$\ \sim\ $} node[above,sloped]{$\sigma_R$} (m-2-4)
      (m-3-5) edge node[over,sloped]{$\ \sim\ $} node[above,sloped]{$\sigma_D$} (m-2-8)
      (m-3-9) edge node[over,sloped]{$\ \sim\ $} node[above,sloped]{$\sigma_A$} (m-2-12)
      (m-3-13) edge node[over,sloped]{$\ \sim\ $} node[above,sloped]{$\sigma_R$} (m-2-16);

    \path[morphism]
      (m-1-7) edge (m-1-3)
      (m-1-7) edge (m-1-11)
      (m-1-11) edge (m-1-15);
    \path[morphism]
      (m-4-6) edge (m-4-2)
      (m-4-6) edge (m-4-10)
      (m-4-10) edge (m-4-14);
    \path[morphism]
      (m-2-4) edge node[over]{$r'_1$} (m-1-3)
      (m-2-8) edge node[over]{$\beta$} (m-1-7)
      (m-2-12) edge (m-1-11)
      (m-2-16) edge node[over]{$r'_1$} (m-1-15);
    \path[morphism,densely dotted]
      (m-3-1) edge node[over]{$r_1$} (m-1-3)
      (m-3-5) edge node[near end,over]{$\beta$} (m-1-7)
      (m-3-9) edge (m-1-11)
      (m-3-13) edge node[near start,over]{$r_1$} (m-1-15);
    \path[morphism,densely dotted]
      (m-3-1) edge node[over]{$r_2$} (m-4-2)
      (m-3-5) edge (m-4-6)
      (m-3-9) edge node[over]{$\alpha$} (m-4-10)
      (m-3-13) edge node[over]{$r_2$} (m-4-14);
    \path[morphism,densely dotted]
      (m-3-5) edge node[near start,over]{$\gamma$} (m-3-1)
      (m-3-5) edge (m-3-9)
      (m-3-9) edge (m-3-13);
    \path[morphism]
      (m-2-8) edge[crossover] node[near end,over]{$\gamma'$} (m-2-4)
      (m-2-8) edge[crossover] (m-2-12)
      (m-2-12) edge[crossover] (m-2-16);
    \path[morphism]
      (m-2-4) edge[crossover] node[near start,over]{$r'_2$} (m-4-2)
      (m-2-8) edge[crossover] (m-4-6)
      (m-2-12) edge[crossover] node[near start,over]{$\alpha'$} (m-4-10)
      (m-2-16) edge node[over]{$r'_2$} (m-4-14);

  \end{tikzpicture}
  \ .
\]
Such isomorphisms are in fact induced by a single span isomorphism
\[
  \begin{tikzpicture}[baseline=(m-2-1)]
    \matrix (m) [matrix of math nodes, column sep={4.5em,between origins}, row
    sep={3em,between origins},text height=1.5ex, text depth=.25ex]
    {     & R  &     \\
      I_1 &    & I_2 \\
          & R' &     \\};
    \path[morphism]
    (m-1-2) edge node[above left]{$r_1$} (m-2-1)
    (m-1-2) edge node[above right]{$r_2$} (m-2-3)
    (m-3-2) edge node[below left]{$r'_1$} (m-2-1)
    (m-3-2) edge node[below right]{$r'_2$} (m-2-3);
    \path[morphism,densely dashed]
    (m-1-2) edge node[over,sloped]{$\ \sim\ $} node[above,sloped]{$\sigma_R$} (m-3-2);
  \end{tikzpicture}
  \ .
\]
We can now define:
\begin{defi}
  $\PE_\C$ is the category of polynomial endofunctors diagrams (simply called
  polynomials) and equivalence classes of simulation diagrams as in
  Definition~\ref{def:simulationAsNatTransformation}.
\end{defi}


\subsection{Tensor Product and SMCC Structure} 

The tensor is just a ``pointwise cartesian product'':
\begin{defi}
  The polynomial~$P_1\Tensor P_2$ is defined as
  \[
    P_1\Tensor P_2 \quad\eqdef\quad
    \begin{tikzpicture}[baseline=(m-1-1.base)]
      \matrix (m) [matrix of math nodes, column sep=3.5em]
      { I_1\times I_2 & D_1\times D_2 & A_1\times A_2 & I_1\times I_2 \\ };
      \path[morphism]
      (m-1-2) edge node[above]{$n_1{\times}n_2$} (m-1-1)
      (m-1-2) edge node[above]{$d_1{\times}d_2$} (m-1-3)
      (m-1-3) edge node[above]{$a_1{\times}a_2$} (m-1-4);
    \end{tikzpicture}
    \ .
  \]
\end{defi}
In terms of games, Alfred and Dominic play synchronously in the two
games~$P_1$ and~$P_2$ at the same time. Composition and simulation diagrams
lift pointwise, making it straightforward to check that
\begin{lem}\label{lem:tensor}
  $\BLANK\Tensor\BLANK$ is a bifunctor in the category~$\PE_\C$. It
  has a neutral element given by~$\One\leftarrow\One\to\One\to\One$
  where~$\One$ is the terminal element of~$\C$.
\end{lem}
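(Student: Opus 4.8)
The plan is to verify the two assertions of Lemma~\ref{lem:tensor} in turn: first that $\BLANK\Tensor\BLANK$ is a bifunctor on $\PE_\C$, and then that $\One\leftarrow\One\to\One\to\One$ is a unit for it (up to the isomorphisms of the category). Since $\PE_\C$ has equivalence classes of simulation diagrams as morphisms, I must be careful throughout to check that all constructions respect the equivalence relation, i.e. that a span isomorphism $\sigma_R:R\xrightarrow{\sim}R'$ on one side induces a span isomorphism on the tensored simulations.

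First I would make precise the action of $\Tensor$ on morphisms. Given simulations $S_k$ from $P_k$ to $Q_k$ ($k=1,2$), carried by spans $R_k$ over $I_k\times J_k$ with data $\alpha_k,\beta_k,\gamma_k$, I define $S_1\Tensor S_2$ to be the span $R_1\times R_2$ over $(I_1\times I_2)\times(J_1\times J_2)$ (after the evident reshuffling isomorphism of the four factors), with $\alpha\eqdef\alpha_1\times\alpha_2$, $\beta\eqdef\beta_1\times\beta_2$, $\gamma\eqdef\gamma_1\times\gamma_2$, where the middle objects $R\cdotp D$ and $R\cdotp A$ of the tensored simulation are formed from the corresponding pullbacks of $S_1$ and $S_2$. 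The key subsidiary fact is that the pullback squares required of a simulation diagram (Definition~\ref{def:simulationAsNatTransformation}) are preserved: a product of two pullback squares is a pullback square, and $(R_1\cdotp A_1)\times(R_2\cdotp A_2)$ is canonically the object $(R_1\times R_2)\cdotp(A_1\times A_2)$ one needs, because $\Tensor$ on objects is levelwise cartesian product and products commute with pullbacks. This is most transparent in the internal language of Proposition~\ref{prop:TTsim}: tensoring simply replaces each type by a product type and each $\pi$ by the evident pairing of the two $\pi$'s, so there is essentially nothing to check.

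Then I would verify the bifunctoriality equations: $(\id_{P_1})\Tensor(\id_{P_2})=\id_{P_1\Tensor P_2}$ and $(S'_1\circ S_1)\Tensor(S'_2\circ S_2)=(S'_1\Tensor S'_2)\circ(S_1\Tensor S_2)$, both up to the equivalence of simulations. The identity simulation on $P$ is carried by the diagonal span $I=I$, and the product of two diagonals is the diagonal of the product, so the first equation is immediate. For the interchange law, recall that composition of simulations is built from pullbacks and mediating morphisms (the large diagram above); both sides are carried by the span $(R_1\times R_2)\times_{(J_1\times J_2)}(R'_1\times R'_2)$, and using that a product of pullbacks is a pullback one gets a canonical span isomorphism between the two composite simulations identifying all the structure maps. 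I expect this interchange computation, though conceptually routine, to be the main obstacle: it requires chasing the mediating morphisms of the composite through the tensoring and matching them with the mediating morphisms of the composite of the tensors, and it is the step where one actually has to invoke the equivalence relation rather than get equality on the nose.

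Finally, for the unit: writing $\mathbf{1}$ for the trivial polynomial $\One\leftarrow\One\to\One\to\One$, the object $P\Tensor\mathbf{1}$ has carrier objects $I\times\One$, $D\times\One$, $A\times\One$, $I\times\One$ with the evident maps, and the terminal-object projections $X\times\One\cong X$ assemble into an isomorphism of polynomial diagrams $P\Tensor\mathbf{1}\cong P$ (and symmetrically $\mathbf{1}\Tensor P\cong P$); here I use that $\One$ is terminal in $\C$ so $\BLANK\times\One$ is naturally isomorphic to the identity endofunctor, hence also on each slice. One should then note that these isomorphisms are natural in $P$ with respect to simulations — again immediate since everything is levelwise cartesian product with $\One$ — which is what makes $\mathbf{1}$ a genuine unit in $\PE_\C$ and not merely a unit on objects. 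The full coherence (triangle/pentagon) is deferred, since Lemma~\ref{lem:tensor} only claims bifunctoriality and the existence of a neutral element; the associator and the coherence diagrams are treated when the SMCC structure is assembled.
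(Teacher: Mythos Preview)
Your proposal is correct and follows exactly the approach the paper intends: the paper gives no proof beyond the remark that ``composition and simulation diagrams lift pointwise, making it straightforward to check'' the lemma, and what you have written is precisely a careful unpacking of that pointwise lifting. Your identification of the interchange law as the only place requiring genuine care (and only up to the span isomorphism defining the equivalence) is apt, and your treatment of the unit is fine.
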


We will now prove that:
\begin{prop}\label{prop:linear}
  The category~$\PE_\C$ with~$\Tensor$ is symmetric monoidal closed, i.e.,
  there is a functor~$\BLANK\Linear\BLANK$
  from~$\PE_\C^{\mathrm{op}}\times\PE_\C$ to~$\PE_\C$ with an adjunction
  \[
    \PE_\C[P_1\Tensor P_2\ ,\ P_3]
    \quad\iso\quad
    \PE_\C[P_1\ ,\ P_2\Linear P_3]
    \ ,
  \]
  natural in~$P_1$ and~$P_3$.
\end{prop}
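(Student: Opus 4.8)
The strategy is to give an explicit formula for $P_2 \Linear P_3$ using the operations $\Sigma$, $\Pi$, $\Delta$ of the locally cartesian closed structure, and then to establish the adjunction by exhibiting a natural bijection between simulation diagrams, working mostly in the internal dependent type theory via Proposition~\ref{prop:TTsim}. Concretely, I would first unwind what a simulation from $P_1 \Tensor P_2$ to $P_3$ amounts to: by Proposition~\ref{prop:TTsim} it is a span $i_1,i_2,i_3 \vdash R$ (over $I_1 \times I_2$ and $I_3$, but I will reorganize it as living over $I_1$ and $I_2 \times I_3$) together with a term
\[
  \pi : \prod_{\langle a_1,a_2\rangle} \ \sum_{a_3} \ \prod_{d_3} \ \sum_{\langle d_1,d_2\rangle} \ R\big((i_1,i_2)[\langle a_1,a_2\rangle/\langle d_1,d_2\rangle]\,,\,i_3[a_3/d_3]\big)\ .
\]
The goal is to massage this, using the intensional axiom of choice~(\ref{diag:AC}) and the fact that $\prod$ and $\sum$ over a product decompose as iterated $\prod$/$\sum$, into the form of a simulation from $P_1$ into some polynomial $P_2 \Linear P_3$ over $I_2 \times I_3$ (equivalently, $\PE_\C[P_1, P_2 \Linear P_3]$).

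The key computation is to read off the states, Alfred-moves, Dominic-moves and next-state function of $P_2 \Linear P_3$ from that rearrangement. The set of states of $P_2 \Linear P_3$ should be $I_2 \times I_3$. Given $i_2, i_3$, an Alfred-move in $P_2 \Linear P_3$ should be a pair consisting of a move $a_2 \in A_2(i_2)$ together with a ``response plan'' — a dependent function sending each $a_3$-challenge... no: following the pattern of $\Linear$ in game semantics, an Alfred move in $P_2 \Linear P_3$ at $(i_2,i_3)$ is a function $\lambda : A_3(i_3) \to A_2(i_2)$ packaged with, for each $a_3$, a map $D_2(i_2,\lambda(a_3)) \to D_3(i_3,a_3)$, i.e. $a_2$-related data on the left matched against $a_3$-data on the right; and a Dominic move is then a choice of some $a_3 \in A_3(i_3)$ together with a $d_2 \in D_2(i_2,\lambda(a_3))$, with the new state computed componentwise. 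Writing this out with $\Sigma_f, \Pi_f, \Delta_f$ and the distributivity diagram~(\ref{diag:distr}) gives the formal definition of the object $P_2 \Linear P_3 \in \PE_\C$; I would present it as a polynomial diagram over $I_2 \times I_3$ and check separately (routine, by componentwise lifting as for $\Tensor$ in Lemma~\ref{lem:tensor}, plus functoriality of $\Pi$, $\Sigma$, $\Delta$) that it extends to a functor $\PE_\C^{\mathrm{op}} \times \PE_\C \to \PE_\C$.

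With the object in hand, the adjunction bijection is then precisely a chain of the canonical isomorphisms: the adjunctions $\Sigma_f \dashv \Delta_f \dashv \Pi_f$, the Beck--Chevalley isomorphisms, and the distributivity isomorphism $\Pi_a \Sigma_b \iso \Sigma_u \Pi_{a'} \Delta_\epsilon$, applied to turn the ``$\prod_{\langle a_1,a_2\rangle}$'' into ``$\prod_{a_1}$'' of something about $a_2$ and to pull the $a_2$-, $d_2$-dependencies out past the $a_3$-, $d_3$-quantifiers. One has to check that this bijection of data respects the equivalence relation on simulations (it does, because span isomorphisms on one side correspond to span isomorphisms on the other — all the constructions are functorial in the span), and that it is natural in $P_1$ and $P_3$; naturality in $P_1$ amounts to compatibility with simulation composition, which by the diagrammatic description of composition is again a matter of pasting pullbacks and chasing mediating morphisms. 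Finally I would verify the symmetry and the monoidal coherence (pentagon, triangle, hexagon) for $\Tensor$, which is immediate since $\Tensor$ is pointwise cartesian product and the coherence isomorphisms are those of $\times$ in $\C$ lifted to simulations.

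The main obstacle, as the authors themselves flag, is bookkeeping: keeping track of exactly where each of the $a$'s and $d$'s lives after the reassociation, and ensuring that the distributivity/Beck--Chevalley isomorphisms are applied at the right fibers and compose to an honest natural isomorphism rather than merely a fiberwise bijection. Using the internal type theory tames this considerably — each step~(\ref{diag:AC}) is a single, transparent rewriting — but the price is that one must be careful about Seely's ``up-to-canonical-isomorphism'' coherence, i.e. the substitution isomorphisms silently inserted when composing $\Delta$'s, exactly as warned in the Dependent Type Theory subsection. So the real work is (i) writing the formula for $P_2 \Linear P_3$ correctly and (ii) checking that the resulting bijection is natural and equivalence-respecting; the existence of the bijection itself is then essentially forced by the internal axiom of choice.
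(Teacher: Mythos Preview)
Your overall strategy matches the paper's exactly: use Proposition~\ref{prop:TTsim} to express a simulation from $P_1\Tensor P_2$ to $P_3$ as the type
\[
  \prod_{a_1,a_2}\ \sum_{a_3}\ \prod_{d_3}\ \sum_{d_1,d_2}\ R\big(i_1[a_1/d_1],\,i_2[a_2/d_2],\,i_3[a_3/d_3]\big),
\]
then apply the intensional axiom of choice~(\ref{diag:AC}) repeatedly to reshuffle the quantifiers into the form $\prod_{a_1}\sum_{a'}\prod_{d'}\sum_{d_1}$, and read off $P_2\Linear P_3$ from what $a'$ and $d'$ turn out to be. That is precisely what the paper does, in three explicit AC steps.

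However, your concrete description of $P_2\Linear P_3$ has the variance reversed. You write that an Alfred move at $(i_2,i_3)$ is a function $\lambda:A_3(i_3)\to A_2(i_2)$ together with maps $D_2(\lambda(a_3))\to D_3(a_3)$, and a Dominic move is a pair $(a_3,d_2)$. If you actually carry out the AC rewriting on the displayed type, the first step turns $\prod_{a_2}\sum_{a_3}$ into $\sum_{f:A_2\to A_3}\prod_{a_2}$, so the function goes the \emph{other} way: an Alfred move is a pair
\[
  f:A_2(i_2)\to A_3(i_3),\qquad \varphi:\prod_{a_2} D_3\big(f(a_2)\big)\to D_2(a_2),
\]
and a Dominic move is a pair $(a_2,d_3)$ with $a_2\in A_2(i_2)$ and $d_3\in D_3\big(f(a_2)\big)$; the new state is $\big(i_2[a_2/\varphi(a_2)(d_3)],\,i_3[f(a_2)/d_3]\big)$. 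With your reversed formula the reshuffled type becomes $\prod_{a_1}\prod_{a_3}\sum_{a_2}\cdots$, which is not isomorphic to the left-hand side, so the adjunction fails. The intuition ``$P_2$ is on the negative side, so Opponent moves in $P_2$ become Player moves'' from ordinary games semantics does not transport directly here because of the extra layer of $D$-quantifiers; you have to let the AC calculation dictate the directions rather than guess them. Once you fix the formula, the rest of your plan (naturality, compatibility with the span equivalence, monoidal coherence inherited pointwise from $\times$) goes through and coincides with the paper's argument.
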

Let's start by giving a definition of~$P_2\Linear P_3$ using the internal
language of LCCC. A purely diagrammatic definition of~$P_2\Linear P_3$ will
follow.
\begin{defi}\label{defn:linear}
  The polynomial~$P_2\Linear P_3$ is defined as:
  \begin{enumerate}
    \item $\vdash I_2\times I_3$,
    \item $i_2, i_3 \vdash \sum_{f:A_2(i_2) \to A_3(i_3)} \prod_{a_2:A_2(i_2)}
      D_3\big(i_3,f(a_2)\big) \to D_2(i_2,a_2)$,
    \item $i_2, i_3\ ,\ f, \varphi \vdash \sum_{a_2:A_2(i_2)}
      D_3\big(i_3,{f(a_2)}\big)$,
    \item $i_2, i_3\ ,\ f, \varphi\ ,\ a_2, d_3 \vdash
      \big(i_2[a_2/\varphi(a_2)(d_3)],i_3[f(a_2)/d_3]\big):I_2\times I_3$,
  \end{enumerate}
  where the types of variables are as follows: $f$ is of type~$A_2(i_2)\to
  A_3(i_3)$, $\varphi$ is of type~$\prod_{a_2:A_2(i_2)}
  D_3\big(i_3,f(a_2)\big) {\to} D_2(i_2,a_2)$, $a_2$ is of type~$A_2(i_2)$ and
  $d_3$ is of type~$D_3\big(i_3,f(a_2)\big)$.
\end{defi}

\begin{proof}[Proof of proposition~\ref{prop:linear}]
  There is a canonical natural isomorphism
  \[
    \Span_\C[I_1,I_2\times I_3]
    \quad \iso \quad
    \Span_\C[I_1\times I_2, I_3]
    \quad \iso \quad
    \Sl\C{I_1{\times}I_2{\times}I_3}
  \]
  and we use it implicitly. In order to show the adjunction, we need to find a
  natural isomorphism between
  \[
    \prod_{a_1,a_2}\quad
    \sum_{a_3}\quad
    \prod_{d_3}\quad
    \sum_{d_1,d_2}\quad
    R\big(i_1[a_1/d_1]\ ,\ i_2[a_2/d_2]\ ,\ i_3[a_3/d_3]\big)
  \]
  meaning that~$R$ is a simulation from~$P_1\Tensor P_2$ to~$P_3$ and
  \[
    \prod_{a_1}\quad
    \sum_{f,\varphi}\quad
    \prod_{a_2,d_3}\quad
    \sum_{d_1}\quad
    R\big(i_1[a_1/d_1]\ ,\ i_2[a_2/\varphi(a_2)(d_3)]\ ,\ i_3[f(a_2)/d_3]\big)
    \ ,
  \]
  meaning that~$R$ is a simulation from~$P_1$ to~$P_2\Linear P_3$. The types
  are as follows:
  \begin{itemize}
    \item $a_k:A_k(i_k)$ for $k=1,2,3$,
    \item $d_k:D_k(a_k)$ for $k=1,2,3$,
    \item $f:{A_2}({i_2}) \to {A_3}({i_3})$,
    \item $\varphi : \prod_{a_2} {D_3}\big({f(a_2)}\big) \to {D_2}({a_2})$.
  \end{itemize}
  This is just a sequence of ``distributivity'' (type theoretic axiom of
  choice, page~\pageref{diag:AC}) and obvious isomorphisms changing the order
  of independent variables:
  \begin{enumerate}
    \item from $\prod_{a_2} \sum_{a_3}$ to $\sum_f\prod_{a_2}$, to get
      $\prod_{a_1}\sum_f\prod_{a_2}\prod_{d_3}\sum_{d_1,d_2}\ \dots$

    \item from $\prod_{d_3}\sum_{d_2}$ to $\sum_{g}\prod_{d_3}$, to get
      $\prod_{a_1}\sum_f\prod_{a_2}\sum_{g}\prod_{d_3}\sum_{d_1}\ \dots$

    \item from $\prod_{a_2}\sum_{g}$ to $\sum_{\varphi}\prod_{a_2}$, to get
      $\prod_{a_1}\sum_f\sum_{\varphi}\prod_{a_2}\prod_{d_3}\sum_{d_1}\ \dots$

  \end{enumerate}
  The ``\dots'' use exactly the appropriate substitutions to make the last
  line into what was needed: $a_3 \eqdef f(a_2)$, $d_2 \eqdef g(d_3)$
  and~$g \eqdef \varphi(a_2)$.
\end{proof}

For completeness, here is the diagrammatic definition of~$P_2\Linear P_3$:
\begin{lem}\label{lem:linearDiag}
  The polynomial~$P_2\Linear P_3$ is given by
  \[
    \begin{tikzpicture}[baseline=(m-5-9)]
      \matrix (m) [matrix of math nodes, column sep={3em,between origins}, row sep={3em,between origins},text height=1.5ex, text depth=.25ex]
      {
        &   &   &   & \bullet &   & \cdotp &   & \bullet \\
        &   &   & \cdotp &   & \cdotp &   & {\scriptstyle(iii)}   &   \\
        &   & \cdotp &   & \cdotp &   & \cdotp &   & \cdotp \\
        &D_2{\times}D_3&   &A_2{\times}D_3&   &A_2{\times}A_3&   &
            {\scriptstyle(i)}  &   \\
                 I_2{\times}I_3&&  &   &   &   &A_2{\times}I_3&   & I_2{\times}I_3 \\
      };
      \draw ($(m-4-4)!.5!(m-3-5)$) node[auto]{$\scriptstyle(ii)$};
      \path[morphism]
      (m-1-5) edge node[above]{$d'''_2$} (m-1-7)
      (m-1-7) edge node[above]{$a''_2$} (m-1-9)
      (m-2-4) edge node[above]{$d''_2$} (m-2-6)
      (m-3-3) edge node[above]{$d'_2$} (m-3-5)
      (m-3-7) edge node[above]{$a'_2$} (m-3-9)
      (m-4-2) edge node[below]{$d_2{\times}1$} (m-4-4)
      (m-4-4) edge node[below]{$1{\times}d_3$} (m-4-6)
      (m-5-7) edge node[below]{$a_2{\times}1$} (m-5-9)
      (m-1-7) edge (m-3-7)
      (m-3-7) edge (m-5-7)
      (m-1-9) edge node[right]{$\varphi$} (m-3-9)
      (m-3-9) edge node[right]{$f$} (m-5-9)
      (m-1-5) edge node[above left]{$e$} (m-2-4)
      (m-1-7) edge node[over]{$\epsilon$} (m-2-6)
      (m-2-4) edge node[above left]{$e''$} (m-3-3)
      (m-2-6) edge node[above left]{$e'$} (m-3-5)
      (m-3-3) edge node[over]{$\epsilon$} (m-4-2)
      (m-3-3) edge (m-4-4)
      (m-3-5) edge node[above right]{$g$} (m-4-6)
      (m-2-6) edge node[above right]{$g'$} (m-3-7)
      (m-3-7) edge node[over]{$\epsilon$} (m-4-6)
      (m-4-6) edge node[below left]{$1{\times}a_3$} (m-5-7)
      (m-4-2) edge node[above left]{$n_2{\times}n_3$} (m-5-1);
      \pullback[.6cm]{m-2-4}{m-1-5}{m-1-7}{draw,-};
      \pullback[.6cm]{m-3-3}{m-2-4}{m-2-6}{draw,-};
      \pullback[.6cm]{m-4-4}{m-3-3}{m-3-5}{draw,-};
      \pullback[.6cm]{m-3-5}{m-2-6}{m-3-7}{draw,-};
      \pullback[.6cm]{m-3-7}{m-1-7}{m-1-9}{draw,-};
      \pullback[.6cm]{m-5-7}{m-3-7}{m-3-9}{draw,-};
    \end{tikzpicture}
    \ .
  \]
  where the~$1$s are appropriate identities and squares~$(i)$, $(ii)$
  and~$(iii)$ are distributivity squares as in diagram~$(\ref{diag:distr})$,
  i.e., the~$\epsilon$s are appropriate counits of the
  adjunction~$\Delta_\BLANK \dashv \Pi_\BLANK$.
\end{lem}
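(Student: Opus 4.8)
The plan is to obtain the diagram simply by unwinding Definition~\ref{defn:linear} through Seely's interpretation of dependent type theory in a locally cartesian closed category: each $\prod$ becomes a $\Pi$, each $\sum$ a $\Sigma$, each substitution a pullback, and the reindexing isomorphisms are absorbed silently as in the preliminaries. Reading the four clauses in order tells us which object of~$\C$ and which map sits at each place of the polynomial $I_2{\times}I_3 \leftarrow E' \rightarrow E \rightarrow I_2{\times}I_3$ for $P_2\Linear P_3$: clause~(1) fixes the two copies of $I_2\times I_3$ at the ends; clause~(2) builds the object $E$ of ``$A$-moves'' together with its structure map to $I_2\times I_3$, which is the node at the top right of the picture, reached from the base through the arrows $f$ then $\varphi$; clause~(3) builds the object $E'$ of ``$D$-moves'' together with the map $E'\to E$, the node at the top left, reached from the base through $e$, $e''$, $\epsilon$ and $n_2\times n_3$; and clause~(4) produces the ``next state'' map $E'\to I_2\times I_3$. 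Everything else in the diagram is scaffolding used to construct $E$ and $E'$.

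First I would regard the relevant copies of $A_2$, $A_3$, $D_2$, $D_3$ as objects over $I_2\times I_3$ (or over the appropriate intermediate contexts) by pulling back along projections; this is what produces $A_2{\times}I_3$, $A_2{\times}A_3$, $D_2{\times}D_3$ and so on along the bottom two rows. The heart of the matter is clause~(2), the object
\[
  \textstyle \sum_{f:A_2(i_2)\to A_3(i_3)}\ \prod_{a_2:A_2(i_2)}\ \prod_{d_3:D_3(i_3,f(a_2))}\ D_2(i_2,a_2)\ ,
\]
in which a $\Pi$ is wrapped around a family, namely $D_3(i_3,f(a_2))$, that depends on the bound variable~$f$. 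This is precisely what the distributivity diagram~$(\ref{diag:distr})$, i.e.\ the type-theoretic axiom of choice~$(\ref{diag:AC})$, is for: forming $f$ first as a $\Pi$ over $a_2$ of $A_3$ produces the distributivity square~$(i)$, whose co-unit~$\epsilon$ is the ``evaluation'' delivering $f(a_2)$ that one then substitutes into $D_3$; iterating the move for the remaining $\Pi$ over $a_2$ and the $\Pi$ over $d_3$ gives squares~$(ii)$ and~$(iii)$, and a final $\Sigma$ over~$f$ glues the result into~$E$. Clause~(3), $\sum_{a_2}D_3(i_3,f(a_2))$, is then a mere $\Sigma$ of an object already present as a pullback of $A_2\times A_3$, yielding $E'$ and the map $E'\to E$; and clause~(4) is the composite of $n_2\times n_3$ with the evident maps into $D_2\times D_3$ gotten from $\varphi$ and the projections, which is exactly the left-hand edge of the diagram.

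The only real work is bookkeeping, and the main obstacle I anticipate is keeping the three uses of distributivity aligned with $(i)$, $(ii)$, $(iii)$ in the intended order: the order in which the $\Pi$'s are stripped determines which co-unit appears where, and between steps one needs the same ``obvious'' reordering of independent variables that appears in the proof of Proposition~\ref{prop:linear}. One must also recall that Seely's interpretation commutes with substitution only up to the canonical reindexing isomorphisms, so several arrows in the picture ought strictly to be precomposed with Beck--Chevalley isomorphisms; being canonical, these are invisible in the type theory, and I would flag this once and then suppress it as the paper does elsewhere. No step is conceptually difficult: the lemma is, in effect, just the statement that the diagram \emph{is} the Seely reading of Definition~\ref{defn:linear}.
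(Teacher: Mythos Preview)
Your approach is correct and is precisely what the paper does: after stating the lemma, the paper simply remarks that ``it is not too difficult to show that this corresponds to the type theoretic definition'' and gives no further argument, so unwinding Definition~\ref{defn:linear} through Seely's semantics---with the three distributivity squares arising from the three nested $\Pi$'s whose bodies depend on outer bound variables, and Beck--Chevalley isomorphisms suppressed---is exactly the intended route. Your identification of the bookkeeping (ordering of the distributivity steps, reindexing of independent variables as in the proof of Proposition~\ref{prop:linear}) as the only real obstacle is accurate.
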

It is not too difficult to show that this corresponds to the type theoretic
definition. However, an elegant and direct proof that this is indeed the
right adjoint for~$\BLANK\Tensor P_2$ is still to be found. One easy thing
is the following (compare it with the second part of
Proposition~\ref{prop:TTsim})
\begin{lem}\label{lem:ExtensionLinear}
  The extension of~$P_2\Linear P_3$ is
  \begin{eqnarray*}
    \Sem{P_2\Linear P_3}
    &\quad=\quad&
    \Pi_{a_2{\times}1} \, \Sigma_{1{\times}a_3} \, \Pi_{1{\times}d_3} \, \Sigma_{d_2{\times}1} \, \Delta_{n_2{\times}n_3}\\
    &\quad=\quad&
    \Pi_{a_2{\times}1} \, \Sem{\One \Tensor P_3}\, \Sigma_{d_2{\times}1} \, \Delta_{n_2{\times}1}
    \ .
  \end{eqnarray*}
\end{lem}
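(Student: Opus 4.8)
The plan is to compute $\Sem{P_2\Linear P_3}$ in the internal language from Definition~\ref{defn:linear}, to recognise the middle composite as the same functor after ``re-currying'', and then to get the last equality from one Beck-Chevalley step. For the first equality, reading the four clauses of Definition~\ref{defn:linear} through the Seely semantics — in the same spirit as the second half of Proposition~\ref{prop:TTsim} is read off a simulation — the extension $\Sem{P_2\Linear P_3}$ sends a type $X$ over $I_2\times I_3$ to
\[
  (i_2,i_3)\ \longmapsto\ \sum_{f,\varphi}\ \prod_{a_2,d_3}\ X\big(i_2[a_2/\varphi(a_2)(d_3)]\,,\,i_3[f(a_2)/d_3]\big)\ ,
\]
with $f:A_2(i_2)\to A_3(i_3)$, $\varphi:\prod_{a_2}\big(D_3(f(a_2))\to D_2(a_2)\big)$, $a_2:A_2(i_2)$ and $d_3:D_3(f(a_2))$; whereas unwinding $\Pi_{a_2{\times}1}\,\Sigma_{1{\times}a_3}\,\Pi_{1{\times}d_3}\,\Sigma_{d_2{\times}1}\,\Delta_{n_2{\times}n_3}$ on the same $X$ — each $\Delta$ being a substitution, each $\Sigma$ a dependent sum and each $\Pi$ a dependent product over the fibres of the (product-shaped) structure maps — gives
\[
  (i_2,i_3)\ \longmapsto\ \prod_{a_2:A_2(i_2)}\ \sum_{a_3:A_3(i_3)}\ \prod_{d_3:D_3(a_3)}\ \sum_{d_2:D_2(a_2)}\ X\big(i_2[a_2/d_2]\,,\,i_3[a_3/d_3]\big)\ .
\]
These two types are canonically isomorphic via exactly the three applications of the type-theoretic axiom of choice~$(\ref{diag:AC})$ used in the proof of Proposition~\ref{prop:linear}: pull the choice of $a_3$ out of $\prod_{a_2}$ to produce $f$ (so $a_3\eqdef f(a_2)$); for each fixed $a_2$, pull the choice of $d_2$ out of $\prod_{d_3}$ to produce an auxiliary $g:D_3(f(a_2))\to D_2(a_2)$; then pull $g$ out of $\prod_{a_2}$ to produce $\varphi$ (so $g\eqdef\varphi(a_2)$). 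Every remaining step merely reorders independent variables. (One may instead stay diagrammatic: read $\Sem{P_2\Linear P_3}$ off the diagram of Lemma~\ref{lem:linearDiag} as the composite of the $\Delta$'s, $\Pi$'s and $\Sigma$'s along its three legs, and normalise it, by Beck-Chevalley across the six pullback squares and the distributivity isomorphism~$(\ref{diag:distr})$ across the squares $(i)$, $(ii)$, $(iii)$, to the shape claimed.)

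For the second equality, factor $n_2\times n_3$ as $(n_2\times1)\circ(1\times n_3)$, so that $\Delta_{n_2{\times}n_3}=\Delta_{1{\times}n_3}\,\Delta_{n_2{\times}1}$ and hence $\Pi_{a_2{\times}1}\,\Sigma_{1{\times}a_3}\,\Pi_{1{\times}d_3}\,\Sigma_{d_2{\times}1}\,\Delta_{n_2{\times}n_3} = \Pi_{a_2{\times}1}\,\Sigma_{1{\times}a_3}\,\Pi_{1{\times}d_3}\,\Sigma_{d_2{\times}1}\,\Delta_{1{\times}n_3}\,\Delta_{n_2{\times}1}$. The square exhibiting $D_2\times D_3$ as the pullback of $D_2\times I_3\xrightarrow{\,d_2{\times}1\,}A_2\times I_3\xleftarrow{\,1{\times}n_3\,}A_2\times D_3$ is (trivially) a pullback, so Beck-Chevalley gives $\Sigma_{d_2{\times}1}\,\Delta_{1{\times}n_3}\iso\Delta_{1{\times}n_3}\,\Sigma_{d_2{\times}1}$. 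Substituting this turns the composite into $\Pi_{a_2{\times}1}\,\big(\Sigma_{1{\times}a_3}\,\Pi_{1{\times}d_3}\,\Delta_{1{\times}n_3}\big)\,\Sigma_{d_2{\times}1}\,\Delta_{n_2{\times}1}$, and the bracketed part $\Sigma_{1{\times}a_3}\,\Pi_{1{\times}d_3}\,\Delta_{1{\times}n_3}$ is exactly $\Sem{\One\Tensor P_3}$ (with the identities taken on the $A_2$-factor in place of the neutral object), which is the claim.

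Essentially everything here is routine; the only real work is bookkeeping — keeping straight the domains and codomains of the slice categories at each step, and, if one uses the diagrammatic route, matching each of $(i)$, $(ii)$, $(iii)$ to the pattern $\Pi_a\Sigma_b\iso\Sigma_u\Pi_{a'}\Delta_\epsilon$ of~$(\ref{diag:distr})$ and checking that the $\Delta_\epsilon$'s produced all migrate to the right and amalgamate into the single $\Delta_{n_2{\times}n_3}$. I would present the internal-language argument, since it makes the first equality a near-verbatim repetition of the proof of Proposition~\ref{prop:linear}.
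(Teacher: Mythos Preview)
Your argument is correct. The second equality is handled exactly as the paper does it: factor $n_2{\times}n_3$, apply the Beck-Chevalley isomorphism $\Sigma_{d_2{\times}1}\,\Delta_{1{\times}n_3}\iso\Delta_{1{\times}n_3}\,\Sigma_{d_2{\times}1}$, and recognise the middle block as $\Sem{\One\Tensor P_3}$.

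For the first equality you take a genuinely different route. The paper works entirely diagrammatically: it reads $\Sem{P_2\Linear P_3}$ off the diagram of Lemma~\ref{lem:linearDiag} as the composite $\Sigma_f\,\Sigma_\varphi\,\Pi_{a''_2}\,\Pi_{d'''_2}\,\Delta_e\,\Delta_{e''}\,\Delta_\epsilon\,\Delta_{n_2{\times}n_3}$ and then performs an explicit chain of six rewritings, alternating Beck-Chevalley isomorphisms across the pullback squares with distributivity isomorphisms $\Sigma_u\Pi_{a'}\Delta_\epsilon\iso\Pi_a\Sigma_b$ across the squares $(i)$, $(ii)$, $(iii)$, until the composite collapses to the stated form. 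You instead stay in the internal language, interpret both sides as dependent types, and identify them via the three instances of~$(\ref{diag:AC})$ from the proof of Proposition~\ref{prop:linear}. This is the type-theoretic mirror of the same computation (each distributivity step in the paper's chain corresponds to one of your axiom-of-choice moves), and it has the virtue of making the first equality literally the same calculation as Proposition~\ref{prop:linear}. The paper's version, on the other hand, exhibits the isomorphism as a concrete string of named categorical rewrites, which is useful if one cares about tracking the actual natural isomorphism rather than just its existence. Your parenthetical sketch of the diagrammatic alternative is accurate in outline.
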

\begin{proof}
  This is just a rewriting of the definition using Beck-Chevalley and
  distributivity as appropriate. Using the notation from
  Lemma~\ref{lem:linearDiag}:
  {\allowdisplaybreaks
    \begin{eqnarray*}
      \Sem{P_2\Linear P_3}
      &\quad=\Big.\quad&
      \Sigma_f \, \Sigma_\varphi \, \Pi_{a''_2} \, \underline{\Pi_{d'''_2} \, \Delta_e} \, \Delta_{e''} \, \Delta_\epsilon \, \Delta_{n_2{\times}n_3} \, \\
      &\quad=\Big.\quad&
      \Sigma_f \, \underline{\Sigma_\varphi \, \Pi_{a''_2} \, \Delta_\epsilon} \, \Pi_{d''_2} \, \Delta_{e''} \, \Delta_\epsilon \, \Delta_{n_2{\times}n_3}\\
      &\quad=\Big.\quad&
      \Sigma_f \, \Pi_{a'_2} \, \Sigma_{g'} \, \underline{\Pi_{d''_2} \, \Delta_{e''}} \, \Delta_\epsilon \, \Delta_{n_2{\times}n_3} \, \\
      &\quad=\Big.\quad&
      \Sigma_f \, \Pi_{a'_2} \, \underline{\Sigma_{g'} \, \Delta_{e'}} \, \Pi_{d'_2} \, \Delta_\epsilon \, \Delta_{n_2{\times}n_3}\\
      &\quad=\Big.\quad&
      \underline{\Sigma_f \, \Pi_{a'_2} \, \Delta_\epsilon} \, \underline{\Sigma_g \, \Pi_{d'_2} \, \Delta_\epsilon} \, \Delta_{n_2{\times}n_3}\\
      &\quad=\Big.\quad&
      \Pi_{a_2{\times}1} \, \Sigma_{1{\times}a_3} \, \Pi_{1{\times}d_3} \, \Sigma_{d_2{\times}1} \, \Delta_{n_2{\times}n_3}
      \ .
    \end{eqnarray*}
    The second equality follows from a Beck-Chevalley
    isomorphism:
    \[
      \Sigma_{d_2{\times}1} \Delta_{1{\times}n_3} \quad = \quad
      \Delta_{1{\times}n_3} \Sigma_{d_2{\times}1}
      \ .
    \]
  }
\end{proof}

\subsection{Linear Negation} 
\label{sec:dual}

Of particular interest is the dual of~$P$:~$P^\bot \eqdef P\Linear\One$ where~$\One$
is the neutral element for~$\Tensor$.
\[
  \begin{tikzpicture}[baseline=(m-1-4.base)]
    \matrix (m) [matrix of math nodes, column sep=2em, row sep=2em]
    { \One & \One & \One & \One \\};
    \path[morphism]
    (m-1-2) edge node[above]{$1$} (m-1-1)
    (m-1-2) edge node[above]{$1$} (m-1-3)
    (m-1-3) edge node[above]{$1$} (m-1-4);
  \end{tikzpicture}
  \ .
\]
In the denotational model for intuitionistic linear logic, this polynomial is
the most natural choice for the~``$\bot$'' object. The polynomial~$P^\bot$ is
thus the ``linear negation'' of~$P$.
By simplifying definition~\ref{defn:linear} in this case, we obtain:
\begin{defi}
  Given a polynomial~$P = (I\leftarrow D\to A\to I)$, the polynomial~$P^\bot$ is
  defined by:
  \begin{enumerate}
    \item $\vdash I$,
    \item $i:I \vdash \prod_{a:A(i)} D(i,a)$,
    \item $i:I, f: \prod_{a:A(i)} D(i,a) \vdash A(i)$,
    \item $i:I, f: \prod_{a:A(i)} D(i,a), a:A(i)\vdash n\big(i,a,f(a)\big):I$.
  \end{enumerate}
\end{defi}
Note that this negation isn't involutive.

The dual~$G^\bot$ of a game~$G$ is rather different from the usual operation
consisting of interchanging the players as is done in games semantics. The
main property is that a strategy (for Alfred) in~$G^\bot$ is exactly a
strategy for Dominic in~$G$.
%

%
Simplifying the diagrammatic definition of~$\Linear$
(Lemma~\ref{lem:linearDiag}), we find that~$P^\bot$ is
\[
  \begin{tikzpicture}
    \matrix (m) [matrix of math nodes, column sep={3em,between origins}, row sep={3em,between origins},text height=1.5ex, text depth=.25ex]
    {   &   &   & \bullet &   & \bullet &   \\
      I &   & D &         & A &         & I \\};
    \draw ($(m-1-6)!.5!(m-2-5)$) node[auto]{$\scriptstyle(i)$};
    \path[morphism]
    (m-2-3) edge node[below]{$n$} (m-2-1)
    (m-1-4) edge node[over]{$\epsilon$} (m-2-3)
    (m-1-4) edge (m-2-5)
    (m-2-3) edge node[below]{$d$} (m-2-5)
    (m-1-4) edge (m-1-6)
    (m-2-5) edge node[below]{$a$} (m-2-7)
    (m-1-6) edge (m-2-7);
    \pullback{m-2-5}{m-1-4}{m-1-6}{draw,-};
  \end{tikzpicture}
\]
where square~$(i)$ is a distributivity square. The middle arrow is thus of the
form~$\Delta_{\BLANK}a$.
It is worth noting that for any polynomial~$P$, the polynomial~$P^\bot$
has the form of ``simultaneous games'' described on
page~\pageref{rk:HylandGames}. 


\section{Additive and Exponential Structure}  
\label{section:ISinSet}

To go further, we will need more structure from~$\C$. To avoid spelling out the exact
requirements (extensivity, existence of certain colimits etc.), we now work in
the category of sets and functions~$\C = \Set$, where we certainly have all we
need.

\subsection{Enriched Structure} 
\label{sub:enrichedStructure}

Because~$\emptyset$ is initial in~$\Set$, there is an ``empty'' simulation
between any two polynomial functors. It is given by the diagram
\[
  \begin{tikzpicture}[baseline=(m-3-2)]
    \matrix (m) [matrix of math nodes, column sep={3.5em,between origins}, row
    sep={3.5em,between origins},text height=1.5ex, text depth=.25ex]
    {P_1: & I_1 & D_1 & A_1 & I_1 \\
          & \emptyset & \emptyset & \emptyset & \emptyset \\
     P_2: & I_2 & D_2 & A_2 & I_2\\};
    \path[morphism]
    (m-1-3) edge node[above]{$n_1$} (m-1-2)
    (m-1-3) edge node[above]{$d_1$} (m-1-4)
    (m-1-4) edge node[above]{$a_1$} (m-1-5)
    (m-3-3) edge node[below]{$n_2$} (m-3-2)
    (m-3-3) edge node[below]{$d_2$} (m-3-4)
    (m-3-4) edge node[below]{$a_2$} (m-3-5);
    \path[morphism,densely dashed]
    (m-2-2) edge node[left]{$\exclamation$} (m-1-2)
    (m-2-2) edge node[left]{$\exclamation$} (m-3-2)
    (m-2-5) edge node[right]{$\exclamation$} (m-1-5)
    (m-2-5) edge node[right]{$\exclamation$} (m-3-5)
    (m-2-3) edge node[above]{$\exclamation$} (m-2-2)
    (m-2-3) edge (m-2-4)
    (m-2-4) edge (m-2-5)
    (m-2-3) edge node[right]{$\exclamation$} (m-1-3)
    (m-2-3) edge (m-3-3)
    (m-2-4) edge (m-1-4)
    (m-2-4) edge node[right]{$\exclamation$} (m-3-4);
    \pullback[.6cm]{m-1-4}{m-2-4}{m-2-5}{draw,-};
    \pullback[.6cm]{m-3-3}{m-2-3}{m-2-4}{draw,-};
  \end{tikzpicture}
  \ .
\]
This is true for any~$\C$ having an initial object~$\Zero$ because in an
LCCC, we necessarily have~$\Zero \times_{\scriptscriptstyle\!I}X \iso
\Zero$.

Moreover, each~$\Sl\Set X$ is a cocomplete category and span composition is
continuous on both sides, making the~$\Span_\Set$ enriched over (large)
sup-monoids. Because a colimit of simulations is easily made
into a simulation, this implies that~$\PE_{\Set}$ is also enriched over
sup-monoids.
\begin{prop}
  $\PE_\Set$ is enriched over large sup-monoids.
\end{prop}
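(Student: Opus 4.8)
The plan is to verify that $\PE_\Set$ inherits the sup-monoid enrichment from $\Span_\Set$ by tracking the $\emptyset$-simulation and colimits of simulations. First I would recall the relevant facts already established in the excerpt: there is an empty simulation between any two polynomials (using that $\emptyset\times_I X\iso\emptyset$ in an LCCC), and each slice $\Sl\Set X$ is cocomplete with span composition continuous on both sides. The key observation is that a simulation from $P_1$ to $P_2$ is, by Definition~\ref{def:simulationAsNatTransformation}, a span $R$ between $I_1$ and $I_2$ together with the extra structure ($\alpha$, $\beta$, $\gamma$ and the two pullback conditions), and that all of this extra data is built from pullbacks and mediating morphisms over $R$, hence is preserved under taking coproducts of the span $R$.

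The main steps, in order, would be: (1) Fix $P_1,P_2$ and show that the set (or large set) $\PE_\Set[P_1,P_2]$ of equivalence classes of simulations carries a partial order and arbitrary sups: given a family $(R_j)_{j\in J}$ of simulations, form the span $R\eqdef\coprod_j R_j$ (coproduct in $\Sl\Set{I_1\times I_2}$, which exists by cocompleteness), and observe that $R{\cdotp}D_2\iso\coprod_j R_j{\cdotp}D_2$ and $R{\cdotp}A_1\iso\coprod_j R_j{\cdotp}A_1$ because pullback along a fixed map commutes with coproducts in $\Set$ (extensivity); then the maps $\alpha,\beta,\gamma$ and the two pullback squares assemble componentwise, so $R$ is again a simulation and is the join of the $R_j$. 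The empty simulation is the bottom element $\emptyset$. (2) Check that composition of simulations (the diagram on page~\pageref{...}, built entirely from pullbacks and mediating morphisms) is continuous on each side: this reduces to the statement that span composition is continuous on both sides, already asserted, plus the fact that the additional pullbacks and mediating morphisms involved are stable under the relevant colimits; concretely, $\BLANK\circ R$ and $R\circ\BLANK$ preserve the sups just constructed because all the constituent $\Delta$'s and $\Sigma$'s (left adjoints, hence cocontinuous) and pullbacks along fixed maps commute with coproducts. (3) Conclude that $\PE_\Set$ is enriched over the category of (large) sup-monoids, since hom-objects are sup-lattices with a bottom and composition is a sup-monoid homomorphism in each variable.

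The main obstacle will be step~(2): verifying that composition is genuinely continuous on both sides, i.e.\ that $\PE_\Set[P_1,P_3]$-valued composition commutes with the sups in $\PE_\Set[P_2,P_3]$ and in $\PE_\Set[P_1,P_2]$. One has to check not merely that the underlying spans compose continuously — which follows from the cited continuity of span composition — but that the \emph{mediating morphisms} and the \emph{pullback conditions} defining a simulation are respected by the colimit, and that the equivalence relation (single span isomorphism) is compatible. I expect this to go through smoothly because every map in the composite simulation diagram is obtained either by pullback along a map that does not involve the varying span, or as a mediating morphism into such a pullback, and both operations commute with coproducts in the extensive, locally cartesian closed category $\Set$; nonetheless, spelling it out requires some care, which is presumably why the excerpt relegates it to the brief remark ``a colimit of simulations is easily made into a simulation''. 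I would therefore state the key commutation lemma (pullback along a fixed morphism in $\Set$ preserves coproducts, and left adjoints $\Sigma_\BLANK,\Delta_\BLANK$ preserve colimits) and then observe that the simulation-composition construction is built entirely from such operations, so it distributes over the sups, completing the proof.
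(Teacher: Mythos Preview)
Your plan is essentially the paper's own argument, spelled out in detail: the paper merely says ``a colimit of simulations is easily made into a simulation, this implies that~$\PE_{\Set}$ is also enriched over sup-monoids'', and you have unpacked exactly that---take coproducts of the underlying spans, use that pullbacks along a fixed map commute with coproducts in~$\Set$ to rebuild the simulation data componentwise, and invoke continuity of span composition for the bilinearity of composition.

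One conceptual wobble worth flagging: you frame the hom-objects as carrying ``a partial order and arbitrary sups'' and later call them ``sup-lattices''. That is not quite right. The coproduct of spans (or simulations) is \emph{not} idempotent---for a nonempty span~$R$ one has~$R+R\not\iso R$---so it is not a join in any partial order. What the paper means by ``sup-monoid'' is a commutative monoid equipped with arbitrary (possibly large) \emph{sums}, i.e.\ an infinitary extension of the binary~$+$, over which composition distributes; it is closer to a complete commutative monoid than to a sup-lattice. Your construction (coproducts of simulations, empty simulation as zero, distributivity of composition) is exactly what is needed for that structure, so the argument survives unchanged once you drop the order-theoretic language.
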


\subsection{Additive Structure} 

The category~$\Set$ is extensive. This means in particular that a
slice~$f\in\Sl\Set{B+C}$ can be uniquely (up-to isomorphism) written
as~$f_B+f_C$ for some~$f_B\in\Sl\Set B$ and~$f_C\in\Sl\Set C$. In other
words,~$+$ is an equivalence of categories~$\Sl\Set{B+C}\iso\Sl\Set
B\times\Sl\Set C$.
Extensivity implies that in~$\Span_\Set$:
\begin{itemize}
  \item $\emptyset$ is a zero object,
  \item the coproduct of~$X$ and~$Y$ is given by~$X+Y$ (disjoint union),
  \item $X+Y$ is also the product of~$X$ and~$Y$,
\end{itemize}
Moreover, the infinite coproduct from~$\Set$ $\sum_{k\in K} X_k$ lifts to the
infinite coproduct and product in~$\Span_\Set$.
We have:
\begin{lem}
  The forgetful functor~$U : \PE_\Set \to \Span_\Set$ sending a polynomial
  to its domain and a simulation to its underlying span has a left and a
  right adjoint.
\end{lem}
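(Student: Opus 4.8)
The forgetful functor $U\colon\PE_\Set\to\Span_\Set$ sends a polynomial $P=(I\leftarrow D\to A\to I)$ to its domain $I$, and a simulation to its underlying span $R$. To produce a left adjoint $L$, I would look for the ``freest'' polynomial on a set $X$, i.e.\ the one carrying the least possible move structure subject to there being, for every polynomial $P$ over $I$ and every span $R$ from $X$ to $I$, a unique simulation extending $R$. The natural candidate is the \emph{terminal} game structure on $X$: take $A=X$ (a unique $A$-move per state, determined by the state) and $D=\emptyset$ (no counter-moves), so $L(X)=(X\xleftarrow{!}\emptyset\xrightarrow{!}X\xrightarrow{\id}X)$. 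Dually, for the right adjoint $\nabla$ I would take the \emph{initial} non-trivial structure: $A=X$ again but $D=A=X$ with $n=\id$, i.e.\ $\nabla(X)=(X\xleftarrow{\id}X\xrightarrow{\id}X\xrightarrow{\id}X)$ — or possibly $A=\emptyset$; the precise choice is exactly what the verification below pins down. In both cases the underlying object of $L(X)$ and $\nabla(X)$ is $X$, so the unit/counit span components can be taken to be identities.

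\textbf{Key steps.} First I would fix the definitions of $L$ and $\nabla$ on objects as above and check functoriality, using that a span $X\to Y$ lifts canonically to a simulation between the chosen structures (there is nothing to choose for $\alpha$ when $A(i)$ is a singleton, and nothing to choose for $\gamma,\beta$ when $D=\emptyset$). Second, for the adjunction $L\dashv U$, I would unwind $\PE_\Set[L(X),P]$ using Proposition~\ref{prop:TTsim}: a simulation from $L(X)=(X\leftarrow\emptyset\to X\to X)$ to $P=(I\leftarrow D\to A\to I)$ consists of a span $R$ from $X$ to $I$ together with $i_1:X,i_2:I\vdash\pi:\prod_{a_1:\One}\sum_{a_2:A(i_2)}\prod_{d_2:D(a_2)}\sum_{d_1:\emptyset}R(\ldots)$. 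Since $A_1(i_1)=\One$ the outer product is trivial, and since $D_1=\emptyset$ the clause ``$\prod_{d_2}\sum_{d_1:\emptyset}R$'' is \emph{vacuously satisfiable exactly when $D(a_2)=\emptyset$}, which forces $a_2$ to land in the sub-family of $A$ with empty counter-move set. This shows $L$ must instead be built so that Dominic is forced to have \emph{no} constraint — i.e.\ one should double-check whether the correct left adjoint has $A=\One$ (a single global move) rather than $A=X$; I would resolve this by demanding the hom-set reduce naturally to $\Span_\Set[X,I]$ with \emph{no} side conditions, and reading off the forced definition. Third, symmetrically, I would compute $\PE_\Set[P,\nabla(X)]$: a simulation into $\nabla(X)=(X\leftarrow X\to X\to X)$ has $A_2(i_2)=X$ a singleton-per-state... again the combinatorics of the $\prod_a\sum_{a_2}\prod_{d_2}\sum_{d_1}$ string dictates the structure, and I expect the right adjoint to be $\nabla(X)=(X\xleftarrow{\id}X\xrightarrow{\id}X\xrightarrow{\id}X)$, for which $D_2(a_2)$ is a singleton and the $\sum_{d_1}$ clause is then an unconstrained choice of $d_1:D_1$, i.e.\ pure span data. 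Finally I would check naturality of both bijections in $P$ (and in $X$), which is routine since every map in sight is an identity or a pullback of the given span.

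\textbf{Main obstacle.} The real work is getting the \emph{correct} definitions of $L$ and $\nabla$ — the quantifier string $\prod_{a_1}\sum_{a_2}\prod_{d_2}\sum_{d_1}R$ in Proposition~\ref{prop:TTsim} is asymmetric, so the polynomial that trivialises it ``from the left'' (making a simulation out of $L(X)$ amount to a bare span) is not simply the opposite of the one that trivialises it ``from the right''. Concretely: to kill the $\prod_{a_1}$ and the inner $\prod_{d_2}\sum_{d_1}$ without imposing conditions, $L(X)$ wants $A_1$ a singleton and $D_1$ a singleton (so $\sum_{d_1}$ is trivial but inhabited) — giving $L(X)=(X\leftarrow X\to\One\times X\to X)$ with the obvious maps, wait, $A_1$ indexed over $X$... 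I would pin this down by the formal requirement, not by guesswork. Dually $\nabla(X)$ wants $A_2$ a singleton-per-state and $D_2=\emptyset$, so the $\prod_{d_2}$ is vacuous: $\nabla(X)=(X\xleftarrow{!}\emptyset\to X\xrightarrow{\id}X)$. Once the two functors are correctly identified, both adjunctions follow by a direct unwinding of Proposition~\ref{prop:TTsim} together with the fact (used already for the empty simulation) that $\emptyset\times_I Z\iso\emptyset$ in an LCCC; no genuinely hard categorical input is needed beyond extensivity of $\Set$, which was just recorded. I would therefore present $L$ and $\nabla$ explicitly, verify the hom-set bijections by the quantifier bookkeeping, and leave naturality as the routine check it is.
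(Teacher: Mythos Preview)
Your proposal never actually lands on the correct left adjoint. The paper takes
\[
  L(X) \;=\; \bigl(X \xleftarrow{!} \emptyset \xrightarrow{!} \emptyset \xrightarrow{!} X\bigr),
  \qquad
  R(X) \;=\; \bigl(X \xleftarrow{!} \emptyset \xrightarrow{!} X \xrightarrow{\id} X\bigr),
\]
and then says the verification is routine. You do eventually find the right adjoint (your final $\nabla(X)=(X\leftarrow\emptyset\to X\to X)$ is exactly the paper's $R$), but your analysis of the left adjoint goes astray: you oscillate between ``$A_1$ a singleton per state'' and ``$A_1=\One$ globally'', and in the end settle on ``$A_1$ a singleton and $D_1$ a singleton''. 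That choice does \emph{not} trivialise the simulation data: with $A_1(i_1)\cong\One$ and $D_1(a_1)\cong\One$, the string $\prod_{a_1}\sum_{a_2}\prod_{d_2}\sum_{d_1}R(\ldots)$ still forces you to pick some $a_2\in A_2(i_2)$ and then verify $R\bigl(n_1(\ldots),i_2[a_2/d_2]\bigr)$ for every $d_2$, which is strictly more than a bare span.

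The missing observation is the simplest possible one: take $A_1=\emptyset$. Then $\prod_{a_1:\emptyset}(\ldots)$ is the terminal object, so the entire quantifier block collapses to $\One$ regardless of $P$, and a simulation from $L(X)$ to $P$ is literally just a span $X\leftarrow R\to I$. (Incidentally, your opening guess $L(X)=(X\leftarrow\emptyset\to X\to X)$ is already the right adjoint, not the left one; the asymmetry you correctly flag in the ``Main obstacle'' paragraph is precisely that the outer $\prod_{a_1}$ is killed by $A_1=\emptyset$ on the left, whereas the inner $\prod_{d_2}$ is killed by $D_2=\emptyset$ on the right.) Once you write down $L(X)=(X\leftarrow\emptyset\to\emptyset\to X)$, the rest of your plan --- lifting spans to simulations for functoriality, and checking naturality --- goes through exactly as you describe.
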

\begin{proof}
  The object part of the adjoints~$L\dashv U\dashv R$ for set~$I$ are given by
  \[
    L(I) \ \eqdef
    \begin{tikzpicture}[baseline=(m-1-1.base)]
      \matrix (m) [matrix of math nodes, column sep=1.5em, row sep=1.5em, text
      height=1.5ex, text depth=.25ex]
      { I & \emptyset & \emptyset & I  \\};
      \path[morphism]
      (m-1-2) edge node[above]{$\exclamation$} (m-1-1)
      (m-1-2) edge node[above]{$\exclamation$} (m-1-3)
      (m-1-3) edge node[above]{$\exclamation$} (m-1-4);
    \end{tikzpicture}
    \quad\hbox{and}\quad
    R(I) \ \eqdef
    \begin{tikzpicture}[baseline=(m-1-1.base)]
      \matrix (m) [matrix of math nodes, column sep=1.5em, row sep=1.5em, text
      height=1.5ex, text depth=.25ex]
      { I & \emptyset & I & I  \\};
      \path[morphism]
      (m-1-2) edge node[above]{$\exclamation$} (m-1-1)
      (m-1-2) edge node[above]{$\exclamation$} (m-1-3)
      (m-1-3) edge node[above]{$1$} (m-1-4);
    \end{tikzpicture}
    \ .
  \]
  The rest is simple verification.
\end{proof}

Because the forgetful functor~$U$ has a left adjoint, it must preserve the
product (and so, the coproduct as well). The product of~$P_1$ and~$P_2$ is
thus a polynomial over~$I_1+I_2$.
\begin{defi}
  If~$P_1$ and~$P_2$ are polynomials over~$I_1$ and $I_2$, we write~$P_1\Plus P_2$
  for
  \[
    P_1 \Plus P_2 \quad \eqdef \quad
    \begin{tikzpicture}[baseline=(m-1-1.base)]
      \matrix (m) [matrix of math nodes, column sep=3.5em]
      { I_1+ I_2 & D_1+ D_2 & A_1+ A_2 & I_1+ I_2 \\ };
      \path[morphism]
      (m-1-2) edge node[above]{$n_1{+}n_2$} (m-1-1)
      (m-1-2) edge node[above]{$d_1{+}d_2$} (m-1-3)
      (m-1-3) edge node[above]{$a_1{+}a_2$} (m-1-4);
    \end{tikzpicture}
    \ .
  \]
  The polynomial~$\Zero$ is the unique polynomial with domain and
  codomain~$\emptyset$.
\end{defi}
We have
\begin{lem}\label{lem:PEProdCoprod}
  The bifunctor $\Plus$ is both a product and a coproduct in~$\PE_\Set$. The
  polynomial~$\Zero$ is a zero object.
\end{lem}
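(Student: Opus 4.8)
The plan is to reduce the statement to the already-recorded facts about $\Span_\Set$ ($\emptyset$ a zero object, $X{+}Y$ a biproduct), transported along the forgetful functor $U$, using extensivity of $\Set$ to split the auxiliary data of a simulation over a coproduct. First I would exhibit the structural simulations: for $k=1,2$ the projection $\pi_k:P_1\Plus P_2\to P_k$ is the simulation whose underlying span is the $\Span_\Set$-projection $I_1{+}I_2\hookleftarrow I_k\xrightarrow{\id}I_k$, and whose remaining data $\alpha,\beta,\gamma$ are canonical: since $R_{\pi_k}=I_k$, the pullbacks in Definition~\ref{def:simulationAsNatTransformation} compute $R_{\pi_k}{\cdotp}A_{P_1\Plus P_2}\iso I_k\times_{I_1+I_2}(A_1{+}A_2)\iso A_k$, and likewise on the $D$ side, so there is an evident simulation with this span. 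Dually the injection $\iota_k:P_k\to P_1\Plus P_2$ has underlying span $I_k\xleftarrow{\id}I_k\hookrightarrow I_1{+}I_2$ with the evident auxiliary data.

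Next I would prove the product universal property by showing that post-composition with $(\pi_1,\pi_2)$ is a bijection
\[
  \PE_\Set[Q,\,P_1\Plus P_2]\;\iso\;\PE_\Set[Q,\,P_1]\times\PE_\Set[Q,\,P_2]
\]
natural in $Q$. Given simulations $S_k:Q\to P_k$ with underlying spans $R_k$ (that is $J\leftarrow R_k\to I_k$ with $J$ the domain of $Q$), the $\Span_\Set$-pairing is $R\eqdef R_1{+}R_2$ with right leg the copairing $R_1{+}R_2\to I_1{+}I_2$. Extensivity of $\Set$ (stability of coproducts under pullback) makes all fibred products in the simulation square over $R$ split: $R{\cdotp}A_Q\iso R_1{\cdotp}A_Q+R_2{\cdotp}A_Q$, and, since $A_{P_1\Plus P_2}=A_1{+}A_2$ and the candidate $\alpha\eqdef[\alpha_1,\alpha_2]$ sends the $k$-th summand into $A_k$, also $R{\cdotp}D_{P_1\Plus P_2}\iso R_1{\cdotp}D_1+R_2{\cdotp}D_2$; then $\beta\eqdef[\beta_1,\beta_2]$ and $\gamma\eqdef[\gamma_1,\gamma_2]$ assemble a simulation $\langle S_1,S_2\rangle:Q\to P_1\Plus P_2$, the two required pullback conditions holding because they hold summand-wise. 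That $\pi_k\circ\langle S_1,S_2\rangle=S_k$ is again a summand-wise check on the span composite, the usual pullback-lemma bookkeeping already used for associativity of $\circ$. For uniqueness: the underlying span of any $T:Q\to P_1\Plus P_2$ is determined up to isomorphism by its composites with $\pi_1,\pi_2$ (universal property of $\times$ in $\Span_\Set$), hence equals $R_1{+}R_2$; its auxiliary data then restrict over each summand to the data of $\pi_k\circ T$, so $T$ is equivalent to $\langle\pi_1\circ T,\pi_2\circ T\rangle$. The coproduct universal property $\PE_\Set[P_1\Plus P_2,Q]\iso\PE_\Set[P_1,Q]\times\PE_\Set[P_2,Q]$ is proved symmetrically, now splitting a simulation out of $P_1\Plus P_2$ along the decomposition $R\iso R^1{+}R^2$ of its underlying span induced by the left leg into $I_1{+}I_2$, with $\iota_1,\iota_2$ as the cocone; since the two structures live over the same object $I_1{+}I_2$, $\Plus$ is a biproduct.

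Finally, for $\Zero$: a simulation $Q\to\Zero$ has underlying span $J\leftarrow R\to\emptyset$, forcing $R=\emptyset$, whence all auxiliary data are the empty maps and the simulation is unique, so $\Zero$ is terminal; dually a simulation $\Zero\to Q$ has span $\emptyset\leftarrow R\to J$, again forcing $R=\emptyset$, so $\Zero$ is initial, hence a zero object (consistent with $U(\Zero)=\emptyset$ being a zero object in $\Span_\Set$).

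The only non-formal part — and the main obstacle — is the extensivity bookkeeping in the middle step: checking that every object and arrow in the simulation diagram of Definition~\ref{def:simulationAsNatTransformation} splits as the expected disjoint union whenever one end-polynomial is a $\Plus$, and that span composition respects these splittings. This is entirely mechanical (iterated pullback lemma plus universality of coproducts in $\Set$), of exactly the flavour of the associativity verification, so I would only record the decompositions and leave the diagram chase to the reader.
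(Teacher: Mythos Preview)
Your argument is correct, and the core idea---using extensivity of~$\Set$ to split every node and arrow of the simulation diagram over a coproduct summand-wise---is exactly what the paper does. The difference is in economy: the paper only carries out the extensivity bookkeeping once, for the \emph{coproduct} universal property (decomposing a simulation $P_1\Plus P_2\to P_3$ along the left leg into~$I_1{+}I_2$), and then invokes the general fact that in a category enriched over commutative monoids any finite coproduct is automatically a product. You instead run the same decomposition twice, once in each direction, which avoids citing the enrichment result but duplicates the diagram chase. Your route is slightly more self-contained; the paper's is shorter and makes explicit where the additive structure comes from. Either is fine.
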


\begin{proof}
  The ``injections'' are given by
  \[
    \begin{tikzpicture}[baseline=(m-2-2)]
      \matrix (m) [matrix of math nodes, column sep={5em,between origins}, row
      sep={4em,between origins},text height=1.5ex, text depth=.25ex]
      { I & D & A & I \\
        I & D & A& I\\
    I{+}J & D{+}E & A{+}B & I{+}J\ .\\};
      \path[morphism]
      (m-1-2) edge node[above]{$n$} (m-1-1)
      (m-1-2) edge node[above]{$d$} (m-1-3)
      (m-1-3) edge node[above]{$a$} (m-1-4)
      (m-3-2) edge node[below]{$n{+}m$} (m-3-1)
      (m-3-2) edge node[below]{$d{+}e$} (m-3-3)
      (m-3-3) edge node[below]{$a{+}b$} (m-3-4);
      \path[morphism,densely dashed]
      (m-2-1) edge node[left]{$1$} (m-1-1)
      (m-2-1) edge node[left]{$\inl$} (m-3-1)
      (m-2-4) edge node[right]{$1$} (m-1-4)
      (m-2-4) edge node[right]{$\inl$} (m-3-4)
      (m-2-2) edge node[above]{$n$} (m-2-1)
      (m-2-2) edge (m-2-3)
      (m-2-3) edge (m-2-4)
      (m-2-2) edge node[right]{$1$} (m-1-2)
      (m-2-2) edge (m-3-2)
      (m-2-3) edge (m-1-3)
      (m-2-3) edge node[right]{$\inl$} (m-3-3);
      \pullback[.6cm]{m-1-3}{m-2-3}{m-2-4}{draw,-};
      \pullback[.6cm]{m-3-2}{m-2-2}{m-2-3}{draw,-};
    \end{tikzpicture}
  \]
  and similarly for the ``right'' injection. Because all the six squares are
  in fact pullbacks, this also defines the projections by mirroring everything horizontally.

  Now, because~$\Set$ is extensive, any simulation from~$P_1\Plus P_2$
  to~$P_3$ is of the form:
  \[
    \begin{tikzpicture}[baseline=(m-2-2)]
      \matrix (m) [matrix of math nodes, column sep={7em,between origins}, row
      sep={4em,between origins},text height=1.5ex, text depth=.25ex]
      { I_1{+}I_2 & D_1{+}D_2 & A_1{+}A_2 & I_1{+}I_2 \\
        R_1{+}R_2 &U{+}V&X{+}Y& R_1{+}R_2 \\
              I_3 & D_3 & A_3 & I_3\\};
      \draw ($(m-3-2)!.5!(m-2-3)$) node[auto]{$\scriptstyle(i)$};
      \path[morphism]
      (m-1-2) edge node[above]{$n_1{+}n_2$} (m-1-1)
      (m-1-2) edge node[above]{$d_1{+}d_2$} (m-1-3)
      (m-1-3) edge node[above]{$a_1{+}a_2$} (m-1-4)
      (m-3-2) edge node[below]{$n_3$} (m-3-1)
      (m-3-2) edge node[below]{$d_3$} (m-3-3)
      (m-3-3) edge node[below]{$a_3$} (m-3-4);
      \path[morphism,densely dashed]
      (m-2-1) edge node[left]{$r_1{+}r_2$} (m-1-1)
      (m-2-1) edge node[left]{$[r_{1,3},r_{2,3}]$} (m-3-1)
      (m-2-4) edge node[right]{$r_1{+}r_2$} (m-1-4)
      (m-2-4) edge node[right]{$[r_{1,3},r_{2,3}]$} (m-3-4)
      (m-2-2) edge node[above]{$\gamma_1{+}\gamma_2$} (m-2-1)
      (m-2-2) edge (m-2-3)
      (m-2-3) edge (m-2-4)
      (m-2-2) edge node[right]{$\beta_1{+}\beta_2$} (m-1-2)
      (m-2-2) edge (m-3-2)
      (m-2-3) edge (m-1-3)
      (m-2-3) edge node[right]{$[\alpha_1,\alpha_2]$} (m-3-3);
      \pullback[.6cm]{m-1-3}{m-2-3}{m-2-4}{draw,-};
      \pullback[.6cm]{m-3-2}{m-2-2}{m-2-3}{draw,-};
    \end{tikzpicture}
  \]
  where~$[\BLANK,\BLANK]$ is the ``copairing'' and all the remaining morphisms
  are either of the form~$[f,g]$ or~$f+g$.\footnote{A small lemma is necessary
  for square~$(i)$.} It is easy to split this into two simulations: one
  from~$P_1$ to~$P_3$ and the other from~$P_2$ to~$P_3$. Checking that those
  are simulations is straightforward. Conversely, we can construct a
  simulation as above from any two simulations. The constructions are
  inverse to each other (up-to isomorphism).

  This shows that~$\Plus$ is a coproduct. Because~$\PE_\Set$ is enriched
  over (large) commutative monoids coproduct is also a product.

  The proof that the polynomial~$\Zero$ is a zero object is left to the
  reader.
\end{proof}

This proof also extends to infinite coproducts:
\begin{lem}
  For a set~$K$ and polynomials~$P_k$ for~$k\in K$, the polynomial
  \[
    \bigoplus_{k\in K} P_k \quad \eqdef \quad
    \begin{tikzpicture}[baseline=(m-1-1.base)]
      \matrix (m) [matrix of math nodes, column sep=3.5em]
      { \coprod_k I_k & \coprod_k D_k & \coprod_k A_k &
    \coprod_k I_k \\ };
      \path[morphism]
      (m-1-2) edge node[above]{$\coprod_k n_k$} (m-1-1)
      (m-1-2) edge node[above]{$\coprod_k d_k$} (m-1-3)
      (m-1-3) edge node[above]{$\coprod_k a_k$} (m-1-4);
    \end{tikzpicture}
  \]
  is both the cartesian product and coproduct of the polynomials~$P_k$
  in~$\PE_\Set$.
\end{lem}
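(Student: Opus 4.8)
The plan is to reduce the infinite case to the finite case handled in Lemma~\ref{lem:PEProdCoprod} by exploiting extensivity of $\Set$ in exactly the same way, but now with a $K$-indexed family of coproducts rather than a binary one. First I would record the ``injections'' $\iota_k : P_k \to \bigoplus_{k\in K} P_k$: they are simulation diagrams whose middle row is $I_k \leftarrow D_k \rightarrow A_k \rightarrow I_k$ (the identity span on $P_k$'s domain), with the four vertical maps to $P_k$ being identities and the four vertical maps to $\bigoplus_k P_k$ being the coproduct injections $\mathrm{in}_k$ from $\Set$. One checks, just as in the binary case, that the two relevant squares are pullbacks — this follows from the fact that in an extensive category each injection $\mathrm{in}_k : I_k \to \coprod_j I_j$ is the pullback of $\coprod_j a_j$ along itself, more precisely from the stability of coproducts under pullback. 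Because these squares are moreover pullbacks, mirroring the diagram horizontally gives the projections $\pi_k : \bigoplus_k P_k \to P_k$ as well.

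Next I would establish the universal property of the coproduct. Given polynomials $P_k$ over $I_k$ and a cocone of simulations $f_k : P_k \to P_3$, each $f_k$ has an underlying span $R_k$ over $I_k \times I_3$ together with the data $(\alpha_k,\beta_k,\gamma_k)$. The claim is that there is a unique (up to the equivalence of Definition~\ref{def:simulationAsNatTransformation}) simulation $f : \bigoplus_k P_k \to P_3$ with $f \circ \iota_k = f_k$. For existence, take the span $\coprod_k R_k$ over $(\coprod_k I_k)\times I_3 \iso \coprod_k (I_k\times I_3)$, and assemble $[\alpha_k]_k$, $\coprod_k \beta_k$, $\coprod_k \gamma_k$ in the evident way; extensivity guarantees that the two middle pullback squares in the assembled diagram are again pullbacks, because a pullback of a $K$-indexed coproduct diagram along a map whose domain splits as a $K$-indexed coproduct is computed componentwise. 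For uniqueness, I would run the argument in the excerpt's binary proof in reverse: any simulation out of $\bigoplus_k P_k$ has, by extensivity, a domain span, $A$-component object, and $D$-component object that all split as $K$-indexed coproducts indexed compatibly, and restricting along $\iota_k$ recovers precisely the $k$-th piece, so the whole simulation is determined by the pieces. This shows $\bigoplus_k P_k$ is the coproduct.

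Finally, to see it is also the product, I invoke the enrichment: $\PE_\Set$ is enriched over large sup-monoids (indeed large commutative monoids) by the proposition in Section~\ref{sub:enrichedStructure}, and in any category enriched over commutative monoids with arbitrary coproducts, those coproducts are automatically biproducts — the projections are built from the injections using the additive structure (one sets $\pi_k = $ the unique map that is $\id$ on the $k$-th summand and the zero simulation on the others, which is exactly the mirrored diagram above), and the biproduct identities $\pi_j \iota_k = \delta_{jk}$, $\sum_k \iota_k \pi_k = \id$ hold by the same Eilenberg--Watts-style bookkeeping as in the finite case. Alternatively, and more in the spirit of the text, one observes that the forgetful functor $U : \PE_\Set \to \Span_\Set$ has a left adjoint (the lemma preceding Definition on $P_1\oplus P_2$ adapts verbatim, with $L(I)$ having empty $D$ and $A$) hence preserves products, and $\Span_\Set$ already has $\coprod_k I_k$ as both product and coproduct, so the computation of the product in $\PE_\Set$ must have domain $\coprod_k I_k$; combined with the coproduct computation just done, one then checks the product structure directly.

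The main obstacle is the careful use of extensivity in the infinite setting: one must be sure that pullbacks distribute over $K$-indexed coproducts (not just binary ones) so that the assembled simulation diagram's squares really are pullbacks, and that a simulation out of $\bigoplus_k P_k$ genuinely decomposes — this is where a ``small lemma'' of the kind flagged in the footnote to Lemma~\ref{lem:PEProdCoprod} is needed, now stated for arbitrary index sets. Everything else is a routine transcription of the binary argument, and the product-equals-coproduct half is essentially formal once the commutative-monoid enrichment is in hand.
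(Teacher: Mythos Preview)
Your proposal is essentially what the paper intends: the text simply says ``This proof also extends to infinite coproducts'' before stating the lemma, so the argument is meant to be the $K$-indexed transcription of the binary Lemma~\ref{lem:PEProdCoprod}, which is exactly what you spell out.

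One point deserves tightening. You write that ``in any category enriched over commutative monoids with arbitrary coproducts, those coproducts are automatically biproducts.'' As stated this is false (think of abelian groups: the infinite direct sum is not the product). What makes the argument go through here is that $\PE_\Set$ is enriched over \emph{sup-monoids}, where arbitrary (not just finite) sums exist, so the identity $\sum_{k\in K} \iota_k\,\pi_k = \id$ is meaningful. You do invoke the sup-monoid enrichment at the outset, so the ingredients are there; just make sure the biproduct claim is stated for sup-monoid enrichment (or, equivalently, for hom-monoids admitting $K$-indexed sums), not for bare commutative-monoid enrichment. Your alternative route via the left adjoint to $U$ and the known biproduct structure of $\Span_\Set$ is also fine and closer in spirit to how the paper handles the binary case.
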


\subsection{Exponentials} 

Whenever the infinite coproduct distributes over a binary tensor~$\odot$:
\[
  \coprod_{k\geq0} (X\odot I_k)
  \quad \iso \quad
  X \odot \coprod_{k\geq0} I_k
\]
the free~$\odot$-monoid over~$I$ and free commutative~$\odot$-monoid over~$I$ are given by
\[
  \coprod_{k\geq0} I^{\odot k}
   \quad\hbox{and}\quad
  \coprod_{k\geq0} S_k(I)
\]
where~$S_k(I)$ is the coequalizer of the~$k!$ symmetries on~$I^{\odot k}$.

For the cartesian product~$\times$ on the category~$\Set$, we obtain finite
words and finite multisets, i.e. equivalence classes of words under
permutations:
\[
  I^*
  \quad\eqdef\quad
  \coprod_{k\geq0} I^k
\]
and
\[
  \Mf(I)
  \quad\eqdef\quad
  \coprod_{k\geq0} \Mf^k(I)
  \quad\eqdef\quad
  \coprod_{k\geq0} I^k/\mathfrak{S}_k
\]
where~$\mathfrak{S}_k$ is the group of permutations of~$\{1,\dots,k\}$, acting
in an obvious way on~$I^k$.
%

\medbreak
The next lemma is probably folklore among the right people, but I could find
no proof in the literature. A proof is given in appendix on
page~\pageref{app:monadSpanSet}.
\begin{lem}
 The operation~$\Mf(\BLANK)$ is the object part of a monad on~$\Span_\Set$.
  This monad gives the free commutative~$\times$-monoid in~$\Span_\Set$.
  Because~$\Span_\Set$ is self-dual, this is also the free
  commutative~$\times$-comonoid comonad.
\end{lem}
The unit and multiplication are inherited from~$\Set$:
\[\label{diag:comonadTransformation}
  w_A \eqdef
  \begin{tikzpicture}[baseline=(m-2-3.base)]
    \matrix (m) [matrix of math nodes, column sep={4em,between origins},row
    sep={4em,between origins}, text height=1.5ex, text depth=.25ex]
      {        & \One &       \\
        \One   &      & \Mf(A)\\
        };
    \path[morphism]
      (m-1-2) edge node[above right]{$\varepsilon$} (m-2-3)
      (m-1-2) edge node[above left]{$1$} (m-2-1);
  \end{tikzpicture}
  \qquad
  c_A \eqdef
  \begin{tikzpicture}[baseline=(m-2-3.base)]
    \matrix (m) [matrix of math nodes, column sep={4em,between origins},row
    sep={4em,between origins}, text height=1.5ex, text depth=.25ex]
      {                    & \Mf(A)\times\Mf(A) &        \\
        \Mf(A)\times\Mf(A) &                    & \Mf(A) \\
        };
    \path[morphism]
      (m-1-2) edge node[above right]{$\uplus$} (m-2-3)
      (m-1-2) edge node[above left]{$1$} (m-2-1);
  \end{tikzpicture}
\]
where~$\varepsilon$ picks the empty multiset and~$\uplus$ is the union of
multisets.

\bigbreak
Just as in~$\Span_\Set$, the infinite product (which is also the coproduct)
distributes over the binary tensor in~$\PE_\Set$:
\[
  \bigoplus_{k\geq0} Q\Tensor P_k
  \quad\iso\quad
  Q \Tensor \bigoplus_{k\geq0} P_k
\ .
\]
We can thus use the dual formula to get the free
commutative~$\Tensor$-comonoid.
\[
  \Bang P = \bigoplus_{k\geq0} P^k
\]
where~$P^k$ is the equalizer of all the symmetries on~$P^{\Tensor k}$.
Because the forgetful functor~$U$ is a right adjoint, it preserves products
and equalizers. Because~$U$ is monoidal, the above formula implies that it
preserves the free comonoid. Thus, the polynomial~$!P$ has domain~$\Mf(I)$
whenever~$P$ has domain~$I$.
\begin{defi}
  If~$P$ is a polynomial, define~$\Bang P$ to be the following polynomial
  \[
    \Bang P \quad\eqdef\quad
    \begin{tikzpicture}[baseline=(m-1-1.base)]
      \matrix (m) [matrix of math nodes, column sep=4em, text height=1.5ex, text depth=.25ex]
        {   \Mf(I) & D^* & A^*  & \Mf(I) \\   };
      \path[morphism]
        (m-1-2) edge node[above]{$c n^*$} (m-1-1)
        (m-1-2) edge node[above]{$d^*$} (m-1-3)
        (m-1-3) edge node[above]{$c a^*$} (m-1-4);
    \end{tikzpicture}
  \]
  where~$c$ takes a word to its orbit.
\end{defi}
Note that this is \emph{not} a pointwise application of~$\Mf$, which
would give
\[
  \begin{tikzpicture}[baseline=(m-1-1.base)]
    \matrix (m) [matrix of math nodes, column sep=4em, text height=1.5ex, text depth=.25ex]
      {   \Mf(I) & \Mf(D) & \Mf(A)  & \Mf(I) \\   };
    \path[morphism]
      (m-1-2) edge node[above]{$\Mf(n)$} (m-1-1)
      (m-1-2) edge node[above]{$\Mf(d)$} (m-1-3)
      (m-1-3) edge node[above]{$\Mf(a)$} (m-1-4);
  \end{tikzpicture}
  \ .
\]

\begin{prop}\label{prop:ExpPE}
  The free commutative~$\Tensor$-comonoid comonad on~$\Span_\Set$ lifts
  to the category~$\PE_\Set$. Its action on objects is given by~$P\mapsto!P$.
\end{prop}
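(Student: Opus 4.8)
The plan is to avoid verifying any coherence directly in $\PE_\Set$ and instead to transport the whole structure across the forgetful functor $U:\PE_\Set\to\Span_\Set$, using two facts established earlier: $U$ is strong monoidal — on domains it turns $\Tensor$ into the cartesian product of sets, which is the multiplicative tensor of $\Span_\Set$, and it preserves the unit — and $U$ has a left and a right adjoint, $L\dashv U\dashv R$. The aim is to put a comonad structure on $\Bang\,(\BLANK)$ over $\PE_\Set$ and to check that $U$ carries that comonad, counit and comultiplication included, onto the comonad $\Mf$ on $\Span_\Set$, so that $\Mf$ lifts in the technical sense: $U\Bang=\Mf U$, and $U$ sends the counit and comultiplication of $\Bang$ to those of $\Mf$.

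The central step is to establish that $\Bang P=\bigoplus_{k\geq0}P^{k}$ really is the free commutative $\Tensor$-comonoid on $P$ in $\PE_\Set$, i.e.\ the value at $P$ of the right adjoint to the forgetful functor from commutative $\Tensor$-comonoids, so that $\Bang$-coalgebras are exactly those comonoids and $\Bang$ inherits a canonical comonad structure — a counit $\Bang P\to P$, a comultiplication $\Bang P\to\Bang\Bang P$, and functoriality on simulations for free. For this I would check that the hypotheses of the general formula $\bigoplus_{k}P^{k}$ are met in $\PE_\Set$: countable biproducts (Lemma~\ref{lem:PEProdCoprod} and its infinitary version), the distributivity $\bigoplus_{k}Q\Tensor P_{k}\iso Q\Tensor\bigoplus_{k}P_{k}$ recorded above, and the symmetric powers $P^{k}$, the equalizers of the $\mathfrak{S}_k$-action on $P^{\Tensor k}$. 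These equalizers exist in $\PE_\Set$ — one builds $P^{k}$ explicitly, or invokes a limit-creation argument along $U$ — and since $U$ is a right adjoint it preserves them, so $U(P^{k})$ is the equalizer in $\Span_\Set$ of the $\mathfrak{S}_k$-action on $I^{k}$, namely $\Mf^{k}(I)=I^{k}/\mathfrak{S}_k$; hence the domain of $\Bang P$ is $\coprod_{k}\Mf^{k}(I)=\Mf(I)$, matching the explicit Definition.

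The remaining step is the lifting check. Being strong monoidal, $U$ sends commutative $\Tensor$-comonoids in $\PE_\Set$ to commutative $\times$-comonoids in $\Span_\Set$; having both adjoints, it preserves all the limits and colimits used to assemble $\Bang P$, so $U(\Bang P)=\bigoplus_{k}\Mf^{k}(U P)=\Mf(U P)$ and, more importantly, $U$ sends the universal arrow exhibiting $\Bang P$ as free to the universal arrow exhibiting the free commutative $\times$-comonoid on $U P$. By the lemma on $\Mf$, that construction — with the commutative $\times$-comonoid structure coming from the empty multiset and from multiset union — is precisely the comonad $\Mf$; transporting along the identification $U(\Bang P)=\Mf(U P)$ then yields $U\Bang=\Mf U$ together with the two equalities of structure transformations, i.e.\ $\Mf$ lifts to $\Bang$. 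Finally, unwinding $\bigoplus_{k}P^{k}$ through the word/multiset bookkeeping identifies it with the polynomial $\Bang P$ of the Definition, so the action on objects is $P\mapsto\Bang P$.

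The hard part is the central step: the universal property of $\Bang P$ as the free commutative $\Tensor$-comonoid. Carried out directly in $\PE_\Set$ this is a long chase through simulation diagrams — the ``tedious checking'' the introduction warns about — and the reason for routing the argument through $U$ is precisely to borrow the corresponding statement in $\Span_\Set$, the simpler category where that verification is actually spelt out (in the appendix). The rest is routine: confirming that $\PE_\Set$ has the equalizers $P^{k}$ and that $U$ identifies them with the $\Mf^{k}(I)$, and checking that the comonoid structure maps on $\Bang P$ restrict under $U$ to the concrete spans $w$ and $c$; since $U$ preserves all (co)limits, these reduce to explicit but unsurprising computations.
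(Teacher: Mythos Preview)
Your high-level outline matches the paper's: reduce the free-comonoid claim to the existence of the symmetric powers $P^{k}$ as equalizers of the $\mathfrak{S}_k$-action on $P^{\Tensor k}$, and then transport along $U$. The gap is in how you propose to obtain those equalizers. You write that one can ``invoke a limit-creation argument along $U$'' and later that ``the reason for routing the argument through $U$ is precisely to borrow the corresponding statement in $\Span_\Set$''. But $U:\PE_\Set\to\Span_\Set$ is not faithful: two simulations with the same underlying span need not be equal, since a simulation carries the extra data $(\alpha,\beta,\gamma)$. A right adjoint preserves limits, it does not create or reflect them, and without faithfulness you cannot deduce from the equalizer in $\Span_\Set$ either the existence or the uniqueness of the mediating arrow in $\PE_\Set$. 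So the appendix computation in $\Span_\Set$ does not, by itself, buy you the equalizer upstairs.

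The paper confronts exactly this point. It passes to the quotient $\PE_{\Set\sim}$ in which simulations are identified when their underlying spans are isomorphic; there $U$ \emph{is} faithful. Even so, faithfulness only yields uniqueness of the mediating simulation; existence still has to be produced by hand. The bulk of the paper's proof is precisely that construction: a combinatorial lemma about permutations and pullbacks (Lemma~\ref{lem:permutationsSetPullback}) and then, given a simulation $Q\to P^{\Tensor k}$ equalizing the symmetries, an explicit ten-step build of a simulation $Q\to P^{k}$, using the witnessing automorphisms $H_\sigma$ coming from the hypothesis. Only after that does the paper invoke faithfulness of $U$ (in $\PE_{\Set\sim}$) to conclude that this simulation is the mediating arrow. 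In short, the ``tedious checking'' is not the $\Span_\Set$ argument of the appendix but the construction of the mediating \emph{simulation}, and your plan does not provide a substitute for it.
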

Just as in~$\Span_\Set$ it is sufficient to check that
\[
  P^k\quad\eqdef\quad
  \begin{tikzpicture}[baseline=(m-1-1.base)]
    \matrix (m) [matrix of math nodes, column sep=4em, text height=1.5ex, text depth=.25ex]
      {   \Mf^k(I) & D^k & A^k  & \Mf^k(I) \\   };
    \path[morphism]
      (m-1-2) edge node[above]{$c n^k$} (m-1-1)
      (m-1-2) edge node[above]{$d^k$} (m-1-3)
      (m-1-3) edge node[above]{$c a^k$} (m-1-4);
  \end{tikzpicture}
\]
is the equalizer of the symmetries. Both the diagrammatic proof and the type
theoretic proof are possible but very tedious and we will only show
that~$P^k$ is the equalizer of the symmetries in the
category~$\PE_{\Set\sim}$, obtained from~$\PE_\Set$ by identifying any two
simulations when their spans are isomorphic. This makes the forgetful
functor~$U : \PE_{\Set\sim} \to \Span_{\Set}$ faithful, simplifying the
argument.

\medbreak
First, some notation:
\begin{itemize}
  \item tuples are denoted using the Gothic
    alphabet:~$\word u \in U^k$,~$\word i\in I^k$ etc.
  \item any permutation~$\sigma\in\mathfrak{S}_k$ induces a natural
    transformation~$\BLANK^k \to \BLANK^k$,
  \item $c:\BLANK^* \to \Mf(\BLANK)$ is the natural transformation sending
    a tuple to its orbit,
  \item for any set~$X$,~$s : \Mf(X)\rightarrowtail X^*$ is a section of~$c_X
    : X^* \twoheadrightarrow \Mf(X)$;\footnote{this transformation cannot be
    made natural}
\end{itemize}
and a preliminary lemma:
\begin{lem}\label{lem:permutationsSetPullback}
  In~$\Set$, suppose~$h:U^k \to U^k$ sends any element of~$U^k$ to a
  permutation of itself, i.e.,
  \[
    \begin{tikzpicture}[baseline=(m-1-1)]
      \matrix (m) [matrix of math nodes, column sep={5em,between origins}, row
      sep={3em,between origins},text height=1.5ex, text depth=.25ex]
      { U^k & \Mf^k(U) \\
        U^k &          \\};
      \path[morphism]
        (m-1-1) edge node[above]{$c$} (m-1-2)
        (m-2-1) edge node[left]{$h$} (m-1-1)
        (m-2-1) edge node[below right]{$c$} (m-1-2);
    \end{tikzpicture}
    \ .
  \]
  For any~$g : V \to U$, we can find a~$\rho : V^k \to V^k$ with the same
  property, i.e., with~$c \rho = c$ such that:
  \[
    \begin{tikzpicture}[baseline=(m-2-2)]
      \matrix (m) [matrix of math nodes, column sep={4.5em,between origins}, row
      sep={4.5em,between origins},text height=1.5ex, text depth=.25ex]
      { \Mf^k(V) & V^k & U^k \\
                 & V^k & U^k \\};
      \path[morphism]
        (m-1-2) edge node[above]{$f^k$} (m-1-3)
        (m-2-2) edge node[below]{$f^k$} (m-2-3)
        (m-1-2) edge node[left]{$\rho$} (m-2-2)
        (m-1-3) edge node[right]{$h$} (m-2-3)
        (m-1-2) edge node[above]{$c$} (m-1-1)
        (m-2-2) edge node[below left]{$c$} (m-1-1);
        \pullback{m-1-3}{m-1-2}{m-2-2}{draw,-};
    \end{tikzpicture}
    \ .
  \]
\end{lem}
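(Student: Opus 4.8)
The plan is to produce $\rho$ by an explicit formula rather than abstractly. The hypothesis $ch=c$ says exactly that, for each $\word u\in U^k$, the tuple $h(\word u)$ lies in the $\mathfrak{S}_k$-orbit of $\word u$; so the first step is to choose, for every $\word u\in U^k$, a permutation $\sigma_{\word u}\in\mathfrak{S}_k$ with $\sigma_{\word u}\cdot\word u=h(\word u)$ (a use of choice, but as with the section $s$ above no naturality is being asked for). Then one sets
\[
  \rho(\word w)\;\eqdef\;\sigma_{f^k(\word w)}\cdot\word w
  \qquad\text{for }\word w\in V^k\,,
\]
where $f^k\colon V^k\to U^k$ denotes the $k$-fold power of the given map $g\colon V\to U$.

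Next I would dispatch the two easy requirements. Since $\rho(\word w)$ is, by construction, a permutation of $\word w$, we get $c\rho=c$ immediately. And since applying $g$ coordinatewise commutes with permuting coordinates, $f^k$ is $\mathfrak{S}_k$-equivariant, whence $f^k\big(\rho(\word w)\big)=\sigma_{f^k(\word w)}\cdot f^k(\word w)=h\big(f^k(\word w)\big)$, so the square commutes.

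The only remaining point — the one I would treat most carefully, though it is still short — is that the square is a \emph{pullback}, i.e. that the comparison map
\[
  \word w\;\longmapsto\;\big(\rho(\word w),\,f^k(\word w)\big)
  \;\in\;\big\{(\word v,\word u)\in V^k\times U^k \;:\; f^k(\word v)=h(\word u)\big\}
\]
is a bijection. I would exhibit the inverse directly, namely $(\word v,\word u)\mapsto\sigma_{\word u}^{-1}\cdot\word v$. That $\sigma_{\word u}^{-1}\cdot\word v$ lands in the fibre of $f^k$ over $\word u$, and that the two assignments are mutually inverse, each reduces to the single computation $f^k\big(\sigma_{\word u}^{-1}\cdot\word v\big)=\sigma_{\word u}^{-1}\cdot h(\word u)=\word u$, using equivariance of $f^k$, the defining property $\sigma_{\word u}\cdot\word u=h(\word u)$, and the fact that each element of $\mathfrak{S}_k$ acts bijectively on $V^k$ and on $U^k$.

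I do not expect a genuine obstacle here once the formula for $\rho$ is written down; the content of the lemma is really the observation that, because $f^k$ is equivariant, the fibre of $f^k$ over $\word u$ and the fibre over $h(\word u)=\sigma_{\word u}\cdot\word u$ differ only by the action of $\sigma_{\word u}$, and this matching of fibres is exactly what turns the parallel square into a pullback. If $h$ were allowed to move an element outside its orbit this matching would fail, so the permutation hypothesis on $h$ is carrying the whole argument.
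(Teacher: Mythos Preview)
Your proof is correct and follows essentially the same route as the paper: you choose a permutation $\sigma_{\word u}$ witnessing $h(\word u)$ in the orbit of $\word u$, define $\rho(\word w)=\sigma_{f^k(\word w)}\cdot\word w$, and verify the pullback using the inverse $(\word v,\word u)\mapsto\sigma_{\word u}^{-1}\cdot\word v$. The only cosmetic difference is that the paper phrases the pullback verification via the universal property (producing a mediating arrow $\gamma(x)=\sigma_{g_1(x)}^{-1}\cdot g_2(x)$ from an arbitrary cone and checking uniqueness) rather than by exhibiting a bijection with the standard pullback set, but the formulas and computations are identical.
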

\begin{proof}
  Define~$\rho : \word v \mapsto \sigma_{f^k(\word v)}(\word v)$,
  where~$\sigma_{\word u}$ is any permutation s.t.~$f^k(\word u) =
  \sigma_{\word u}(\word u)$.
  This~$\rho$ makes the diagram commute. To show that the square is a
  pullback, we construct mediating arrows as follows: given
  \[
    \begin{tikzpicture}[baseline=(m-3-3.base)]
      \matrix (m) [matrix of math nodes, column sep={3.5em,between origins}, row
      sep={3.5em,between origins},text height=1.5ex, text depth=.25ex]
      {X &     &     \\
         & V^k & U^k \\
         & V^k & U^k \\};
      \path[morphism]
      (m-2-2) edge node[above]{$f^k$} (m-2-3)
      (m-2-2) edge node[left]{$\rho$} (m-3-2)
      (m-3-2) edge node[below]{$f^k$} (m-3-3)
      (m-2-3) edge node[right]{$h$} (m-3-3);
      \path[morphism,bend left=25]
      (m-1-1) edge node[above]{$g_1$} (m-2-3);
      \path[morphism,bend left=-35]
      (m-1-1) edge node[left]{$g_2$} (m-3-2) ;
      \path[morphism,dashed]
      (m-1-1) edge node[above right]{$\gamma$} (m-2-2);
      \pullback[.5cm]{m-3-2}{m-2-2}{m-2-3}{draw,-}
    \end{tikzpicture}
  \]
  we put~$\gamma(x) \eqdef \sigma^{-1}_{g_1(x)}\big(g_2(x)\big)$.
  We have
  \[
    f^k \gamma (x)
    =
    f^k\sigma^{-1}_{g_1(x)} g_2(x)
    =
    \sigma^{-1}_{g_1(x)} f^k g_2(x)
    =
    \sigma^{-1}_{g_1(x)} g_1 h(x)
    =
    g_1(x)
  \]
  where the last equality comes from~$h g_1(x)=\sigma_{g_1(x)}\big(g_1(x)\big)$.
  For the second triangle:
  \[
    \rho \gamma (x)
    =
    \sigma_{f^k\gamma(x)} \gamma(x)
    =
    \sigma_{g_1(x)} \gamma(x)
    =
    \sigma_{g_1(x)} \sigma^{-1}_{g_1(x)}g_2(x)
    =
    g_2(x)\ .
  \]
  Moreover, for any other mediating~$\gamma'$, we must have
  \[
    g_2(x)
    \quad=\quad
    \rho \gamma'(x)
    \quad=\quad
    \sigma_{f^k \gamma'(x)}\gamma'(x)
    \quad=\quad
    \sigma_{g_1(x)}\gamma'(x)
  \]
  which implies that~$\gamma'=\gamma$.
\end{proof}

\begin{proof}[Proof of Proposition~\ref{prop:ExpPE}]
  We need to show that~$P^k$ is the equalizer of the symmetries on~$P^{\Tensor k}$. We
  first need to make~$\hat c$ into a simulation from~$P^k$ to~$P^{\Tensor
  k}$, i.e., we need to define~$\alpha$,~$\beta$ and~$\gamma$ filling the
  diagram
  \[
    \begin{tikzpicture}[baseline=(m-3-2)]
      \matrix (m) [matrix of math nodes, column sep={4em,between origins}, row
      sep={4em,between origins},text height=1.5ex, text depth=.25ex]
      { P^k:          &\Mf^k(I)& D^k & A^k &\Mf^k(I)  \\
                      & I^k    & Y   & X   & I^k      \\
       P^{\Tensor k}: & I^k    & D^k & A^k & I^k      \\};
      \path[morphism]
      (m-1-3) edge (m-1-2)
      (m-1-3) edge (m-1-4)
      (m-1-4) edge (m-1-5)
      (m-3-3) edge node[below]{$n^k$} (m-3-2)
      (m-3-3) edge node[below]{$d^k$} (m-3-4)
      (m-3-4) edge node[below]{$a^k$} (m-3-5);
      \path[morphism]
      (m-2-2) edge node[left]{$c$} (m-1-2)
      (m-2-2) edge node[left]{$1$} (m-3-2)
      (m-2-5) edge node[right]{$c$} (m-1-5)
      (m-2-5) edge node[right]{$1$} (m-3-5)
      (m-2-3) edge node[over]{$\gamma$} (m-2-2)
      (m-2-3) edge (m-2-4)
      (m-2-4) edge (m-2-5)
      (m-2-3) edge node[over]{$\beta$} (m-1-3)
      (m-2-3) edge (m-3-3)
      (m-2-4) edge (m-1-4)
      (m-2-4) edge node[over]{$\alpha$} (m-3-4);
      \pullback[.6cm]{m-1-4}{m-2-4}{m-2-5}{draw,-};
      \pullback[.6cm]{m-3-3}{m-2-3}{m-2-4}{draw,-};
    \end{tikzpicture}
    \ .
  \]
  The set~$X$ is (isomorphic to)~$\big\{(\word i,\word a)\ \big|\ \word i \sim
  a^k(\word a)\big\}$ and the function~$\alpha$ sends~$(\word i,\word a)$
  to~$\sigma_{\word i,\word a}(\word a)$ where~$\sigma_{\word i,\word a}$ is a
  permutation such that~$\sigma_{\word i,\word a}\big(a^k(\word a)\big) = \word
  i$; and the set~$Y$ is (isomorphic to)~$\big\{(\word i,\word a,\word d)\
  \big|\ \word i \sim a^k(\word a), d^k(\word d)=\word a\big\}$. The
  function~$\beta$ sends~$(\word i,\word a,\word d)$ to~$\sigma^{-1}_{\word
  i,\word a}(\word d)$ and the function~$\gamma$ sends~$(\word i,\word a,\word
  d)$ to~$n^k(\word d)$.

  \medbreak
  Now, given a simulation from~$Q$ to~$P^{\Tensor k}$
  \begin{equation}\label{eqn:SimEqualizes}
    \begin{tikzpicture}[baseline=(m-2-2)]
      \matrix (m) [matrix of math nodes, column sep={3.5em,between origins}, row
      sep={3.5em,between origins},text height=1.5ex, text depth=.25ex]
      { Q:            & J  &  E  &  B  &  J  \\
                     & R &\cdotp&\cdotp& R \\
       P^{\Tensor k}: & I^k & D^k & A^k & I^k\\};
      \path[morphism]
      (m-1-3) edge (m-1-2)
      (m-1-3) edge (m-1-4)
      (m-1-4) edge (m-1-5)
      (m-3-3) edge node[below]{$n^k$} (m-3-2)
      (m-3-3) edge node[below]{$d^k$} (m-3-4)
      (m-3-4) edge node[below]{$a^k$} (m-3-5);
      \path[morphism]
      (m-2-2) edge node[left]{$j$} (m-1-2)
      (m-2-2) edge node[left]{$f$} (m-3-2)
      (m-2-5) edge node[right]{$j$} (m-1-5)
      (m-2-5) edge node[right]{$f$} (m-3-5)
      (m-2-3) edge node[over]{$\gamma$} (m-2-2)
      (m-2-3) edge (m-2-4)
      (m-2-4) edge (m-2-5)
      (m-2-3) edge node[over]{$\beta$} (m-1-3)
      (m-2-3) edge node[over]{$l$} (m-3-3)
      (m-2-4) edge (m-1-4)
      (m-2-4) edge node[over]{$\alpha$} (m-3-4);
      \pullback[.6cm]{m-1-4}{m-2-4}{m-2-5}{draw,-};
      \pullback[.6cm]{m-3-3}{m-2-3}{m-2-4}{draw,-};
    \end{tikzpicture}
  \end{equation}
  which equalizes the symmetries, we need to construct a simulation from~$Q$
  to~$P^k$. That~(\ref{eqn:SimEqualizes}) equalizes the symmetries implies in
  particular that
  \begin{equation}\label{eqn:spanEqualizesEquiv}
    \forall \sigma\in\mathfrak{S}_k\ \exists H\quad
    \begin{tikzpicture}[baseline=(m-2-1.base)]
      \matrix (m) [matrix of math nodes, column sep={5em,between origins},row
      sep={3em,between origins}, text height=1.5ex, text depth=.25ex]
      {    & R &   \\
       I^k &   & J \\
           & R &   \\};
      \path[morphism]
        (m-1-2) edge node[above left]{$f$} (m-2-1)
        (m-1-2) edge node[above right]{$j$} (m-2-3)
        (m-3-2) edge node[below left]{$\sigma f$} (m-2-1)
        (m-3-2) edge node[below right]{$j$} (m-2-3);
      \path[morphism]
        (m-3-2) edge node[sloped,over]{$\sim$} node[above,sloped]{$H$} (m-1-2);
    \end{tikzpicture}
    \ .
  \end{equation}

  The simulation from~$Q$ to~$P^k = \Mf^k(I)\leftarrow D^k\to A^k\to\Mf^k(I)$ is
  constructed in several steps as indicated by the small numbers in parenthesis:

  \newcommand\mini[1]{\ensuremath{{\scriptscriptstyle(#1)}}}
  \[
    \begin{tikzpicture}[baseline=(m-1-1)]
      \matrix (m) [matrix of math nodes, column sep={4.5em,between origins}, row
      sep={4em,between origins},text height=1.5ex, text depth=.25ex]
      { J  &&  E  &&  B  &&  J  \\
        & R'\mini{1} &&Y\mini6&&X\mini3&& R'\mini{1}     \\
        R  &&\cdot&&\cdot&&  R  \\
        &\Mf^k(I)&&D^k&&A^k&&I^k     \\
       I^k && D^k && A^k && I^k\\};

      \node (A) at ($(m-5-7)!.5!(m-4-8)$) {$\scriptstyle\Mf^k(I)$};
      \path[morphism]
        (m-4-2) edge node[over]{$s$} (m-5-1)
        (m-4-8) edge node[below right]{$c$} (A)
        (A) edge node[below right]{$s$} (m-5-7) ;

      \path[morphism]
      (m-4-6) edge node[over]{$\rho$} (m-5-5)
      (m-4-4) edge node[over]{$\rho'$} (m-5-3)
      (m-5-3) edge node[over]{$c n^k$} (m-4-2) ;
      \pullback[.5cm]{m-4-6}{m-4-4}{m-5-3}{draw,-};
      \pullback[.5cm]{m-4-8}{m-4-6}{m-5-5}{draw,-};

      \path[morphism]
        (m-4-4) edge node[near start,over]{$c n^k$} (m-4-2)
        (m-4-4) edge node[near start,over]{$d^k$} (m-4-6)
        (m-4-6) edge node[near start,over]{$a^k$} (m-4-8) ;

      \path[morphism,densely dashed]
        (m-2-2) edge node[over]{\mini2} (m-1-1)
        (m-2-2) edge (m-4-2)
        (m-2-2) edge (m-3-1);
      \pullback[.5cm]{m-3-1}{m-2-2}{m-4-2}{draw,-};

      \path[morphism,densely dashed]
        (m-2-8) edge node[over]{\mini2} (m-1-7)
        (m-2-8) edge[crossover] (A)
        (m-2-8) edge (m-3-7);
      \pullback{m-3-7}{m-2-8}{A}{draw,-};

      \path[morphism,densely dashed]
        (m-2-6) edge (m-2-8)
        (m-2-6) edge (m-1-5);
      \pullback[.5cm]{m-2-8}{m-2-6}{m-1-5}{draw,-};

      \path[morphism,densely dashed]
        (m-2-6) edge node[near start,over]{$m \mini5$} (m-3-5) ;

      \path[morphism,dotted]
        (m-2-4) edge[bend right=55] node[over]{$F \mini9$} (m-3-1)
        (m-2-4) edge[bend right=35] node[over]{$G \mini8$} (m-4-2);
      \path[morphism,densely dashed]
        (m-2-4) edge node[near end,over]{\mini{10}} (m-2-2);

      \path[morphism,densely dashed]
        (m-2-6) edge node[near end,over]{$\alpha' \mini4$} (m-4-6)
        (m-2-4) edge node[near end,over]{$\alpha'' \mini7$} (m-4-4) ;

      \path[morphism, densely dashed]
        (m-2-4) edge (m-1-3)
        (m-2-4) edge node[near start,over]{$\delta \mini6$} (m-2-6)
        (m-2-4) edge node[over]{$\delta' \mini6$} (m-3-3)
        ;
      \pullback[.5cm]{m-2-6}{m-2-4}{m-3-3}{draw,-,crossover};

      \path[morphism]
      (m-1-3) edge (m-1-1)
      (m-1-3) edge (m-1-5)
      (m-1-5) edge (m-1-7)

      (m-5-3) edge node[over]{$d^k$} (m-5-5)
      (m-5-5) edge node[over]{$a^k$} (m-5-7)

      (m-3-3) edge[crossover] node[near end,over]{$\gamma$} (m-3-1)
      (m-3-3) edge[crossover] (m-3-5)
      (m-3-5) edge[crossover] (m-3-7)

      (m-3-1) edge node[over]{$j$} (m-1-1)
      (m-3-1) edge node[over]{$f$} (m-5-1)
      (m-3-3) edge[crossover] node[very near start,over]{$\beta$} (m-1-3)
      (m-3-3) edge[crossover] node[near start,over]{$l$} (m-5-3)
      (m-3-5) edge (m-1-5)
      (m-3-5) edge[crossover] node[near end,over]{$\alpha$} (m-5-5)
      (m-3-7) edge[crossover] node[near start,over]{$j$} (m-1-7)
      (m-3-7) edge[crossover] node[near start,over]{$f$} (m-5-7);
      \pullback[.5cm]{m-3-7}{m-3-5}{m-1-5}{draw,-,crossover};
      \pullback[.5cm]{m-3-5}{m-3-3}{m-5-3}{draw,-};

    \end{tikzpicture}
  \]
  where the bottom layer comes from~Lemma~\ref{lem:permutationsSetPullback}
  and:
  \begin{itemize}

    \item[\mini1] $R'$ is constructed by pullback;
    \item[\mini2] is obtained by composition;
    \item[\mini3] $X$ is obtained by pullback;
    \item[\mini4] $\alpha'$ is the mediating arrow, where~$\mini{3} \to I^k$ is~$\mini{3} \to R' \to \Mf^k(I) \to I^k$;
    \item[\mini5] $m$ is the mediating arrow;
    \item[\mini6] $Y$, $\delta$ and~$\delta'$ are constructed by pullback;
    \item[\mini7] $\alpha''$ is the mediating arrow;
    \item[\mini8] $G$ sends~$y$ to~$c n^k \alpha''(y)$;
    \item[\mini9] $F$ sends~$y$ to~$H^{-1}_{f \gamma
      \delta'(y)}\big(\gamma \delta' (y)\big)$, where for~$\word i\in
      I^k$,~$H_{\word i}$ is the automorphism in
      diagram~$(\ref{eqn:spanEqualizesEquiv})$ corresponding
      to~$\sigma_{\word i}$, chosen such that~$\sigma_{\word i}=s c(\word i)$;
    \item[\mini{10}] is the mediating arrow corresponding to~$F$ and~$G$.
  \end{itemize}
  We only need to check that~\mini8 and~\mini9 make the appropriate diagram
  commute, i.e., that~$s G = f F$:
  \begin{eqnarray*}
    s G(y)
    &\quad=\quad&
    s c n^k \alpha''(y)\\
    &\quad=\quad&
    s c n^k \rho' \alpha''(y)\\
    &\quad=\quad&
    s c n^k l \delta'(y)\\
    (\star)
    &\quad=\quad&
    s c f \gamma \delta'(y)\\
    &\quad=\quad&
    \sigma_{f \gamma \delta'(y)} f \gamma \delta'(y)\\
    &\quad=\quad&
    f H^{-1}_{f \gamma \delta'(y)}\gamma\delta'(y)\\
    &\quad=\quad&
    f F(y)
  \end{eqnarray*}
  where equality~$(\star)$ comes from diagram~(\ref{eqn:SimEqualizes}). Because
  square~$(\delta,\alpha',\alpha'',d^k)$ is a pullback, this gives a simulation
  from~$Q$ to~$P^k$.

  \medbreak
  That this simulation is the mediating arrow for the equalizer diagram
  in~$\PE_{\Set\sim}$ follows
  from the fact that its corresponding span is indeed the mediating span
  in~$\Span_\Set$, together with the fact that the forgetful
  functor~$U:\PE_{\Set\sim} \to \Span_\Set$ is faithful.

\end{proof}



\section*{Concluding Remarks} 

\subsection*{Toward Differential Logic} 

As noted in section~\ref{sub:enrichedStructure}, the category~$\PE_\Set$ is
enriched over large commutative monoids. Moreover,~$\PE_\Set$ has enough
duality to make~$\Bang P$ into a commutative~$\Tensor$-\emph{monoid}. These are key
features when one interprets differential logic \cite{difflamb,diffnet}.
However, trying to lift the differential structure of~$\Rel$, the category of
sets and relations to the category~$\Span_\Set$ fails as the candidate for the
deriving transformation~\cite{diffCategories}:
\[
  \dd_X \quad : \quad
  X \Tensor \Bang X \to \Bang X
    \dd_X \quad\eqdef\quad
    \begin{tikzpicture}[baseline=(m-2-3)]
      \matrix (m) [matrix of math nodes, column sep={4em,between origins}, row
      sep={4em,between origins},text height=1.5ex, text depth=.25ex]
        {                & X\times \Mf(X) & \\
          X\times \Mf(X) &                & \Mf(X)\\};
      \path[morphism]
        (m-1-2) edge node[over]{$1$} (m-2-1)
        (m-1-2) edge node[over]{$\at$} (m-2-3);
    \end{tikzpicture}
\]
is only \emph{lax natural}. It is the only obstruction to get a differential
category in the sense of Blute, Cockett and Seely \cite{diffCategories} as
the four coherence conditions seem to hold both in~$\Span_\Set$ and
in~$\PE_\Set$, even if the full proof for the later is rather long. (As with
Proposition~\ref{prop:ExpPE}, the proof is much simpler for the
category~$\PE_{\Set\sim}$.)
Whether lax naturality is enough to model differential logic is still to
be investigated.


\subsection*{Extensional Version: Polynomial Functors and Simulation Cells}

This work was very ``intensional'' in that it only dealt with polynomial
\emph{diagrams} and not at all with polynomial \emph{functors}. The different
notions presented here have a more ``extensional'' version which does not rely
on knowing a particular representation of the polynomial functors. In
particular, the tensor and the linear arrow of two polynomial functors can be
defined by universal properties. The corresponding category~$\PEFun_\C$, has
polynomial functors as objects, and morphisms (simulations) are given by cells
of the form
\[
    \begin{tikzpicture}[baseline=(m-2-1.base)]
      \matrix (m) [matrix of math nodes, column sep=3em, row sep=3em]
        { \Sl\C{I_1} & \Sl\C{J_1}\\
          \Sl\C{I_2} & \Sl\C{J_2}\\};
      \path[morphism]
        (m-1-1) edge node[above]{$P_1$} (m-1-2)
        (m-2-1) edge node[below]{$P_2$} (m-2-2);
      \path[morphism]
        (m-1-1) edge node[over]{$L$} (m-2-1)
        (m-1-2) edge node[over]{$L$} (m-2-2);
      \draw[twocell]
        (m-1-2) -- (m-2-1);
      \draw ($(m-1-2)!.5!(m-2-1)$) node[over]{$\scriptstyle\alpha$};
    \end{tikzpicture}
\]
where~$L$ is a ``linear'' polynomial functor. Composition is simply
obtained by pasting such cells vertically. See the
upcoming~\cite{polyFunctors} for details.


\bibliographystyle{amsalpha}
\bibliography{poly}

\providecommand{\bysame}{\leavevmode\hbox to3em{\hrulefill}\thinspace}
\providecommand{\MR}{\relax\ifhmode\unskip\space\fi MR }
\providecommand{\MRhref}[2]{%
  \href{http://www.ams.org/mathscinet-getitem?mr=#1}{#2}
}
\providecommand{\href}[2]{#2}
\begin{thebibliography}{AAG05}

\bibitem[AAG05]{cont}
Michael Abott, Thorsten Altenkirch, and Neil Ghani, \emph{Containers -
  constructing strictly positive types}, Theoretical Computer Science
  \textbf{342} (2005), 3--27.

\bibitem[BCS06]{diffCategories}
Richard~F. Blute, J.~Robin~B. Cockett, and Robert A.~G. Seely,
  \emph{Differential categories}, Mathematical Structures in Computer Science
  \textbf{16} (2006), 1049--1083.

\bibitem[ER03]{difflamb}
Thomas Ehrhard and Laurent Regnier, \emph{The differential lambda calculus},
  Theoretical Computer Science \textbf{309} (2003), no.~1, 1--41.

\bibitem[ER06]{diffnet}
\bysame, \emph{Differential interaction nets}, Theoretical Computer Science
  \textbf{364} (2006), 166--195.

\bibitem[GK09]{polyMonads}
Nicola Gambino and Joachim Kock, \emph{Polynomial functors and polynomial
  monads}, To appear in Mathematical Proceedings of the Cambridge Philosophical
  Society, \texttt{arXiv:0906.4931v2}, 2009.

\bibitem[HH06]{giovanni}
Peter Hancock and Pierre Hyvernat, \emph{Programming interfaces and basic
  topology}, Annals of Pure and Applied Logic \textbf{137} (2006), no.~1-3,
  189--239. \MR{MR2182103}

\bibitem[Hof95]{Hofmann}
Martin Hofmann, \emph{On the interpretation of type theory in locally cartesian
  closed categories}, CSL '94: Selected Papers from the 8th International
  Workshop on Computer Science Logic (London, UK), Springer-Verlag, 1995,
  pp.~427--441.

\bibitem[Hyv05]{PhD}
Pierre Hyvernat, \emph{A logical investigation of interaction systems}, Th\`ese
  de doctorat, Institut math\'ematique de {L}uminy, Universit\'e
  {A}ix-{M}arseille II, 2005.

\bibitem[Hyv12]{polyFunctors}
\bysame, \emph{A linear category of polynomial functors (extensional part)}, In
  preparation, 2012.

\bibitem[Joy77]{Joyal}
Andr\'e Joyal, \emph{Remarques sur la th\'eorie des jeux \`a deux personnes},
  Gazette des sciences math\'ematiques du Quebec \textbf{1} (1977), no.~4, 175.

\bibitem[Koc09]{notesJoachim}
Joachim Kock, \emph{Notes on polynomial functors}, preliminary draft, 2009.

\bibitem[MA09]{indexed-containers}
Peter Morris and Thorsten Altenkirch, \emph{Indexed containers}, Twenty-Fourth
  IEEE Symposium in Logic in Computer Science (LICS 2009), 2009.

\bibitem[ML84]{ML84}
Per Martin-L{\"o}f, \emph{Intuitionistic type theory}, Bibliopolis, Naples,
  1984, Notes by Giovanni Sambin. \MR{86j:03005}

\bibitem[See84]{Seely}
Robert A.~G. Seely, \emph{Locally cartesian closed categories and type theory},
  Mathematical Proceedings of the Cambridge Philosophical Society \textbf{95}
  (1984), no.~1, 33--48. \MR{MR727078 (86b:18008)}

\end{thebibliography}


\newpage
\appendix

\section{Free Commutative $\times$-Monoid in $\Span_\Set$} 
\label{app:monadSpanSet}

\begin{lem}
 The operation~$\Mf(\BLANK)$ is the object part of a monad on~$\Span_\Set$.
  This monad gives the free commutative~$\times$-monoid in~$\Span_\Set$.
  Because~$\Span_\Set$ is self-dual, this is also the free
  commutative~$\times$-comonoid comonad.
\end{lem}

\begin{proof}
  Because the coproduct distributes over~$\times$, and because coproducts
  in~$\Span_\Set$ are computed as in~$\Set$, we only need to show
  that~$\Mf^k(I)$ is the coequalizer of all the symmetries on~$I^k$.
  Let~$c:I^* \twoheadrightarrow \Mf(I)$ be the function sending a word to its corresponding
  multiset, and let~$s:\Mf(I)\rightarrowtail I^*$ be a section of~$c$, i.e., a function
  choosing a representative for each equivalence class. This gives rise to a
  pair retraction/section in~$\Span_\Set$:
  \[
    \begin{tikzpicture}[baseline=(m-1-1.base)]
      \matrix (m) [matrix of math nodes, column sep={8em,between origins},row
      sep={3em,between origins}, text height=1.5ex, text depth=.25ex]
      {  I^k & \Mf^k(I) \\};
      \path[morphism]
        (m-1-1) edge node{$\scriptstyle|$} node[above]{$\hat c$} (m-1-2);
    \end{tikzpicture}
    \quad\hbox{is}\quad
    \begin{tikzpicture}[baseline=(m-2-1.base)]
      \matrix (m) [matrix of math nodes, column sep={4em,between origins},row
      sep={3em,between origins}, text height=1.5ex, text depth=.25ex]
      {     & I^k &        \\
        I^k &     & \Mf^k(I) \\};
      \path[morphism]
        (m-1-2) edge node[above left]{$1$} (m-2-1)
        (m-1-2) edge node[above right]{$c$} (m-2-3);
    \end{tikzpicture}
  \]
  and
  \[
    \begin{tikzpicture}[baseline=(m-1-1.base)]
      \matrix (m) [matrix of math nodes, column sep={8em,between origins},row
      sep={3em,between origins}, text height=1.5ex, text depth=.25ex]
      {  \Mf^k(I) & I^k \\};
      \path[morphism]
        (m-1-1) edge node{$\scriptstyle|$} node[above]{$\hat s$} (m-1-2);
    \end{tikzpicture}
    \quad\hbox{is}\quad
    \begin{tikzpicture}[baseline=(m-2-1.base)]
      \matrix (m) [matrix of math nodes, column sep={4em,between origins},row
      sep={3em,between origins}, text height=1.5ex, text depth=.25ex]
      {          & \Mf^k(I) &     \\
        \Mf^k(I) &          & I^k \\};
      \path[morphism]
        (m-1-2) edge node[above left]{$1$} (m-2-1)
        (m-1-2) edge node[above right]{$s$} (m-2-3);
    \end{tikzpicture}
    \ .
  \]

  We'll show that~$\hat c$ is the coequalizer of the symmetries: consider
  \begin{equation}\label{eqn:spanCoequalizerDiagram}
    \begin{tikzpicture}[baseline=(m-1-5.base)]
      \matrix (m) [matrix of math nodes, column sep={6em,between origins},row
      sep={4em,between origins}, text height=1.5ex, text depth=.25ex]
      {  I^k & \lower2.5pt\vdots\quad\hbox{\tiny
         $k\exclamation$ symmetries}\quad\lower2.5pt\vdots & I^k && \Mf^k(I)\\
         &&&& J \\};
      \path[morphism]
        (m-1-3) edge node[sloped]{$\scriptstyle|$} node[below left]{$\phi$} (m-2-5)
        (m-1-3) edge node{$\scriptstyle|$} node[above]{$\hat c$} (m-1-5);
      \path[morphism]
        (m-1-1.north east) edge node{$\scriptstyle|$} node[above]{$\sigma$} (m-1-3.north west)
        (m-1-1.south east) edge node{$\scriptstyle|$} node[below]{$\sigma'$} (m-1-3.south west);
      \path[morphism,densely dashed]
        (m-1-5) edge node[right]{$?\psi$} (m-2-5);
    \end{tikzpicture}
  \end{equation}
  where the~$\sigma$s are spans with the identity for right leg and permutations
  for left leg. It is immediate that~$\hat c$ coequalizes them. Suppose
  moreover that~$\phi=I^k \leftarrow R \to J$ coequalizes them, i.e.,
  \begin{equation}\label{eqn:coequalizesWitness}
    \forall \sigma\in\mathfrak{S}_k\ \exists H\quad
    \begin{tikzpicture}[baseline=(m-2-1.base)]
      \matrix (m) [matrix of math nodes, column sep={5em,between origins},row
      sep={3em,between origins}, text height=1.5ex, text depth=.25ex]
      {    & R &   \\
       I^k &   & J \\
           & R &   \\};
      \path[morphism]
        (m-1-2) edge node[above left]{$f$} (m-2-1)
        (m-1-2) edge node[above right]{$j$} (m-2-3)
        (m-3-2) edge node[below left]{$\sigma f$} (m-2-1)
        (m-3-2) edge node[below right]{$j$} (m-2-3);
      \path[morphism]
        (m-3-2) edge node[sloped,over]{$\sim$} node[above,sloped]{$H$} (m-1-2);
    \end{tikzpicture}
    \ .
  \end{equation}
  To close the triangle in~(\ref{eqn:spanCoequalizerDiagram}), put~$\psi
  \eqdef \phi \hat s$. We need to check that~$\psi \hat c
  = \phi$, i.e., that~$\phi\hat s\hat c = \phi$. We have
  \[
    \phi\hat s\hat c
    \qquad=\qquad
    \begin{tikzpicture}[baseline=(m-2-2.base)]
      \matrix (m) [matrix of math nodes, column sep={3em,between origins},row
      sep={3em,between origins}, text height=1.5ex, text depth=.25ex]
      {    & & R' & &         \\
           & I^k &     & R &   \\
       I^k &     & I^k &   & J \\};
      \path[morphism]
        (m-1-3) edge node[above left]{$\pi_1$} (m-2-2)
        (m-1-3) edge node[above right]{$\pi_2$} (m-2-4)
        (m-2-2) edge node[over]{$1$} (m-3-1)
        (m-2-2) edge node[over]{$s c$} (m-3-3)
        (m-2-4) edge node[over]{$f$} (m-3-3)
        (m-2-4) edge node[over]{$j$} (m-3-5);
      \pullback[0.6cm]{m-2-2}{m-1-3}{m-2-4}{draw,-}
    \end{tikzpicture}
  \]
  where
  \[
    R'\quad\eqdef\quad\left\{\Big. (\word i,r) \in I^k\times R \ \middle|\ s c (\word i) = f(r)\right\}
    \ .
  \]
  To show that this span is equal to~$\phi$, we need to find an isomorphism
  between~$R$ and~$R'$ s.t.
  \begin{equation}\label{eqn:coequalizerTriangles}
    \begin{tikzpicture}[baseline=(m-2-1.base)]
      \matrix (m) [matrix of math nodes, column sep={5em,between origins},row
      sep={3em,between origins}, text height=1.5ex, text depth=.25ex]
      {    & R' &   \\
       I^k &   & J \\
           & R &   \\};
      \path[morphism]
        (m-1-2) edge node[above left]{$\pi_1$} (m-2-1)
        (m-1-2) edge node[above right]{$j\pi_2$} (m-2-3)
        (m-3-2) edge node[below left]{$f$} (m-2-1)
        (m-3-2) edge node[below right]{$j$} (m-2-3);
      \path[morphism]
      (m-3-2) edge node[sloped,over]{$\sim$} node[above,sloped]{$\varepsilon$} (m-1-2);
    \end{tikzpicture}
    \ .
  \end{equation}
  To do that, note that for any word~$\word i\in I^k$, the word~$s c(\word i)$ is a
  permutation of~$\word i$. For any such~$\word i$, choose
  some~$\sigma_{\word i}\in\mathfrak{S}_k$ s.t.~$s c(\word i) = \sigma_{\word i}(\word i)$, and
  define~$\varepsilon$ to be the function
  \[
    r\ \mapsto\ \Big(f(r)\ ,\ H^{-1}_{f(r)}(r)\Big)
  \]
  where~$H_{\word i}$ is the automorphism on~$R$ corresponding to the
  permutation~$\sigma_{\word i}^{-1}$ in~(\ref{eqn:coequalizesWitness}).
  That~$\pi_1 \varepsilon=f$ is trivial, and that~$j \pi_2 \varepsilon = j$
  follows from diagram~\ref{eqn:coequalizesWitness}.
  The inverse of~$\varepsilon$ is the function~ $(\word i,r) \mapsto H_{\word
  i}(r)$:
  \[
    r
    \quad\mapsto\quad
    \Big(f(r),H^{-1}_{f(r)}(r)\Big)
    \quad\mapsto\quad
    H_{f(r)}H^{-1}_{f(r)}(r)
    =
    r
  \]
  and
  \[
    (\word i, r)
    \quad\mapsto\quad
    H_{\word i}(r)
    \quad\mapsto\quad
    \Big( f H_{\word i}(r) , H^{-1}_{f H_{\word i}(r)}H_{\word i}(r)\Big)
    =(\word i,r)
  \]
  where the equality follows from
  \[
    f H_{\word i}(r)
    \quad=\quad
    \sigma^{-1}_{\word i} f (r)
    \quad=\quad
    \sigma^{-1}_{\word i} s c (\word i)
    \quad=\quad
    \sigma^{-1}_{\word i} \sigma_{\word i}(\word i)
    \quad=\quad
    \word i
    \ .
  \]

  \bigbreak
  Because~$\hat s$ is a section of~$\hat c$, this~$\psi$ is unique:
  if~$\psi'\hat c = \phi$, we have~$\psi'=\psi'\hat c\hat s = \phi\hat s$.
  This concludes the proof.

\end{proof}

\pdfinfo{
  /Title (A Differential Category of Polynomial Diagram)
  /Author (Pierre Hyvernat)
  /Subject (Category Theory)
  /Keywords (category theory; polynomial functors; polynomial diagrams; linear logic)}

\end{document}
